\newtheorem{theorem}{Theorem}
\newtheorem{cor}{Corollary}
\newtheorem{claim}{claim}
\newtheorem{definition}{definition}
\newtheorem{lemma}{lemma}
\begin{document}

\title{Algorithmic Aspects of $ 2 $-Secure Domination in Graphs}
\author{J. Pavan Kumar and P.Venkata Subba Reddy\\
	Department of Computer Science and Engineering \\ 
	National Institute of Technology \\ 
	Warangal, Telangana 506004, India. \\
	jp.nitw@gmail.com, pvsr@nitw.ac.in}
% The correct dates will be entered by the editor
\date{}
\maketitle
\begin{abstract}
Let $G(V,E)$ be a simple, undirected and connected graph. 
A dominating set $S \subseteq V(G)$ is called a $2$-\textit{secure dominating set} ($2$-SDS) in $G$, if for every pair of distinct vertices $u_1,u_2 \in V(G)$ there exists a pair of distinct vertices $v_1,v_2 \in S$ such that $v_1 \in N[u_1]$, $v_2 \in N[u_2]$ and $(S \setminus \{v_1,v_2\}) \cup \{u_1,u_2 \}$ is a dominating set in $G$. The $2$\textit{-secure domination number} denoted  by $\gamma_{2s}(G)$, equals the minimum cardinality of a $2$-SDS in $G$. Given a graph $ G$ and a positive integer $ k,$ the $ 2 $-Secure Domination ($ 2 $-SDM) problem is to check whether $ G $ has a $ 2 $-secure dominating set of size at most $ k.$ It is known that $ 2 $-SDM is NP-complete for bipartite graphs. In this paper, we prove that the $ 2 $-SDM problem is NP-complete for planar graphs and doubly chordal graphs, a subclass of chordal graphs. We strengthen the NP-complete result for bipartite graphs, by proving this problem is NP-complete for some subclasses of bipartite graphs namely, star convex bipartite, comb convex bipartite graphs. We prove that $ 2 $-SDM is linear time solvable for bounded tree-width graphs. We also show that the $ 2 $-SDM is W[2]-hard even for split graphs. The Minimum $ 2 $-Secure Dominating Set (M2SDS) problem is to find a $ 2 $-secure dominating set of minimum size in the input graph. We propose a $ \Delta(G)+1 $ $ - $ approximation algorithm for M2SDS, where $ \Delta(G) $ is the maximum degree of the input graph $ G $ and prove that M2SDS cannot be approximated within $ (1 - \epsilon) \ln(\vert V \vert ) $ for any $ \epsilon > 0 $ unless $ NP \subseteq DTIME(\vert V \vert^{ O(\log \log \vert V \vert )}) $. % even for bipartite graphs. A secure dominating set of a graph \textit{defends} one attack at any vertex of the graph.
Finally, we show that the M2SDS is APX-complete for graphs with $\Delta(G)=4.$
\textit{\textbf{Keywords:}}{$2$-Secure Domination \and W[2]-hard \and Planar graphs \and APX-complete \and Star convex bipartite graphs \and Inapproximability}
% \PACS{PACS code1 \and PACS code2 \and more}
% \subclass{05C69 \and 68Q25}
\end{abstract}

\section{Introduction}
\label{intro}
Let $G(V,E)$ be a simple, undirected and connected graph. For graph theoretic terminology we refer to \cite{west}. For a vertex $v \in V$, the \textit{open neighborhood} of $v$ in $G$ is $N_G(v)$= \{$u \in V: (u,v) \in E$\}, the \textit{closed neighborhood} of $v$ is defined as $N_G[v]=N_G(v) \cup \{v\}$. If $S \subseteq V$, then the open neighborhood of $S$ is the set $N_G(S) = \cup_{v \in S} N_G(v)$. The closed neighborhood of $S$ is $N_G[S] = S \cup N_G(S)$. Let $ S \subseteq V$. Then a vertex $ w \in V $ is called a \textit{private neighbor} of $ v $ with respect to $ S $ if $ N[w] \cap S = \{v\}. $ If further $ w \in V \setminus S$, then $ w $ is called an \textit{external private neighbor} (\textit{epn}) of $v$.  \par 
A subset $S$ of $V$ is a \textit{dominating set} in $G$ if for every $u \in V \setminus S$, there exists $v \in S$ such that $(u,v) \in E$. The \textit{domination number} of $G$ is the minimum cardinality of a dominating set in $G$ and is denoted by $\gamma(G)$.  A set $ D $ is a $ 2 $-\textit{dominating set} if every vertex in $ V \setminus D $ has at least $ 2 $ neighbors in $ D $. A dominating set $S \subseteq V$ is said to be a \textit{secure dominating set} (SDS) in $G$ if for every $u \in V \setminus S$ there exists $v \in S$ such that $(u,v) \in E$ and $(  S \setminus \{ v \})  \cup \{ u \} $ is a dominating set of $G$.  In this context, we say $ v $ $ S$-\textit{defends} $u $ or $ u $ is $ S $-\textit{defended by} $ v $. The minimum cardinality of a SDS in $G$ is called the \textit{secure domination number} of $G$ and is denoted by $\gamma_s(G)$. Suppose a \textit{guard} at a vertex $ v $ of the graph can deal with a problem in its closed neighborhood. A dominating set $ S $ is said to be \textit{secure dominating set} if an attack occurs at any vertex $ u $ of the graph can be $ S $-defended by some vertex $ v \in S $ in its closed neighborhood. However, suppose if two attacks simultaneously happen at any two vertices of the graph, then how to defend both the vertices is an interesting problem. A dominating set $S \subseteq V(G)$ is called a $2$-\textit{secure dominating set} ($2$-SDS) in $G$, if for every pair of distinct vertices $u_1,u_2 \in V(G)$ there exists a pair of distinct vertices $v_1,v_2 \in S$ such that $v_1 \in N[u_1]$, $v_2 \in N[u_2]$ and $(S \setminus \{v_1,v_2\}) \cup \{u_1,u_2 \}$ is a dominating set in $G$. The $2$\textit{-secure domination number} denoted  by $\gamma_{2s}(G)$, equals the minimum cardinality of a $2$-SDS in $G$ and any minimum $ 2 $-secure dominating set is referred as $ \gamma_{2s}$-set of $ G $. Given a graph $ G$ and a positive integer $ k,$ the $ 2 $-Secure Domination ($ 2 $-SDM) problem is to check whether $ G $ has a $ 2 $-secure dominating set of size at most $ k.$ The computational complexity of $ 2 $-SDM has been shown to be NP-complete for split graphs and bipartite graphs \cite{endm}. The Minimum $ 2 $-Secure Dominating Set (M2SDS) problem is to find a $ 2 $-secure dominating set of minimum size in the input graph. 
\paragraph{Preliminaries:} A vertex $u \in N[v]$ is a \textit{maximum neighbor} of $v$ in $G$ if $N[w] \subseteq N[u]$ holds for each $w \in N[v]$. A vertex $v \in V(G)$ is called \textit{doubly simplicial} if it is a simplicial vertex and it has a maximum neighbor in $ G $. An ordering $\{v_1, v_2, \ldots, v_n\}$ of the vertices of $V(G)$ is a \textit{doubly perfect elimination ordering} (DPEO) of $G$ if $v_i$ is a doubly simplicial vertex of the induced subgraph $G_i = G[\{v_i, v_{i+1}, \ldots, v_n\}]$ for every $i$, $1 \le i \le n$. A graph $G$ is \textit{doubly chordal} if and only if $ G $ has a DPEO \cite{doubly}. A \textit{tree} is an undirected graph in which any two vertices are connected by exactly one path. A \textit{star} is a tree $T=(A, F)$, where $A=\{a_0, a_1, \ldots, a_n\}$ and $F=\{(a_0,a_i) : 1 \le i \le n\}$. A \textit{comb} is a tree $T=(A,F)$, where $A=\{a_1, a_2, \ldots, a_{2n}\}$ and $F=\{(a_i,a_{i+1})$ $:$ $1 \le i \le n-1\}$ $\cup$ $\{(a_i,a_{n+i})$ $:$ $1 \le i \le n\}$. A bipartite graph $G (A, B, E)$ is called \textit{tree convex bipartite graph} if there is an associated tree $T=(A, F)$ such that for each vertex $b$ in $B$, its neighborhood $N_G(b)$ induces a subtree of $T$ \cite{jiang}. Further if $T$ is a star or comb, then $G$ is called as star convex bipartite or comb convex bipartite respectively. A \textit{vertex cover} of an undirected graph $ G(V, E) $ is a subset of vertices $ V^\prime \subseteq V$ such that if edge $ (u, v) \in E $, then either $ u \in V^\prime $ or $ v \in V^\prime$ or both. %	A graph $ G(V, E) $ is called a \textit{threshold graph} if there is a real number $ t $ and a real number $ w(v) $ for every $ v \in V $ such that a set $ S \subseteq V $ is independent if and only if $ \sum_{v \in S}w(S) \le t $.
\section{Complexity Results}\label{complexity}	
It is known that $ 2 $-SDM problem is NP-complete even for bipartite graphs and split graphs \cite{endm}. We continue investigating its NP-completeness in other special graphs. In particular, we prove that it is also NP-complete for planar graphs and doubly chordal graphs.
\noindent To show that the $2$-SDM for planar graphs is NP-complete, we use Vertex Cover problem which is NP-complete even for planar graphs \cite{vcplanar}, and is defined as follows. \\[4pt]
\noindent \textbf{Vertex Cover Decision Problem (Vertex-Cover)} \\ [6pt]
\textit{Instance:} A simple, undirected planar graph $G$ and a positive integer $k$.\\
\textit{Question:} Does there exist a vertex cover of size at most $ k $ in $ G$?

\begin{theorem}
	$2$-SDM is NP-complete for planar graphs.
\end{theorem}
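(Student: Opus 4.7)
The plan is to establish $ 2 $-SDM $\in$ NP together with a polynomial-time reduction from the Planar Vertex Cover problem, which is NP-complete by \cite{vcplanar}. Membership in NP is immediate: given a candidate $S \subseteq V(G')$, one verifies in polynomial time that $S$ dominates $G'$ and then, for each of the $O(|V(G')|^2)$ attack pairs $(u_1,u_2)$, searches for distinct $v_1, v_2 \in S$ with $v_1 \in N[u_1]$ and $v_2 \in N[u_2]$ such that $(S \setminus \{v_1,v_2\}) \cup \{u_1,u_2\}$ is still dominating.

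For hardness, given a Planar Vertex Cover instance $(G,k)$ with $V(G)=\{v_1,\ldots,v_n\}$, I would build a planar graph $G'$ by attaching to each $v_i$ a small local gadget $H_i$ composed of new vertices (for instance a short pendant path rooted at $v_i$, possibly augmented with an extra leaf). Since each $H_i$ hangs off the single vertex $v_i$, it can be embedded inside a small disc around $v_i$ in any planar embedding of $G$, so $G'$ is planar. The target size would be $k' = k + cn$, where $c$ is a constant determined by the gadget.

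The correctness argument splits into two directions. In the forward direction, given a vertex cover $C$ of $G$ with $|C|\le k$, I would form $S$ by taking $C$ together with a prescribed subset of each $H_i$ and check that every attack pair admits distinct defenders whose swap remains dominating; the critical cases are attack pairs $(u_1, u_2) \in E(G)$, where the covering property of $C$ ensures that at least one endpoint has a neighbour in $C \subseteq S$ which can act as a defender whose removal is balanced by the swap-in of an endpoint. In the backward direction, starting from a 2-SDS $S'$ of size at most $k'$, the gadget should be engineered so that any 2-SDS satisfies $|S' \cap (H_i \cup \{v_i\})| \ge c$, giving the bound $|S'| \ge cn$; the remaining slack of at most $k$, combined with a local exchange argument (replacing a gadget vertex by $v_i$ when this does not increase $|S'|$ or break the 2-SDS property), yields a set $C' \subseteq V(G)$ of size at most $k$. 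One then argues that $C'$ is a vertex cover, because an uncovered edge $(v_i,v_j)$ would produce an attack pair $(v_i,v_j)$ for which every candidate pair of defenders yields a swap that fails to dominate either $v_i$ or $v_j$.

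The principal obstacle is the choice of $H_i$: naive symmetric gadgets such as two pendants at each $v_i$ make $\gamma_{2s}(G')$ a constant multiple of $n$ independent of the vertex cover number, since one may simply place all leaves into $S$ and already obtain a valid 2-SDS. The gadget therefore needs an asymmetry that penalises leaving $v_i$ out of $S$, and most of the work is an exhaustive case analysis of attack pairs touching the gadget, confirming that both the forward construction and the backward exchange step preserve the dominating-swap condition. Planarity of $G'$ then follows automatically from the local nature of the gadget, and combined with the equivalence above yields the NP-completeness of 2-SDM on planar graphs.
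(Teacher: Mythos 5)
Your NP-membership argument and your choice of starting problem (Vertex Cover on planar graphs) match the paper, but the hardness direction is a plan rather than a proof, and the gap is exactly where the difficulty lies. You never exhibit a concrete gadget $H_i$: you yourself note that the obvious pendant constructions fail because $\gamma_{2s}(G')$ then becomes a function of $n$ alone, independent of the vertex cover number, and you leave the required ``asymmetry'' unspecified. Consequently neither direction of the claimed equivalence is actually established. Moreover, a per-vertex gadget with target $k+cn$ is structurally unlikely to suffice, because vertex cover is a constraint on \emph{edges}: with gadgets hanging only off the $v_i$'s, a set $S'$ can dominate and defend everything while $S'\cap V(G)$ covers no edges at all, since original vertices can be dominated and defended entirely from inside their own gadgets. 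The paper's reduction reflects this: its target value is $r=m+n+k+2$, i.e.\ the construction carries one unit of forced cost per \emph{edge} of $G$ as well as per vertex, plus a constant-size attachment, and it is precisely the per-edge structure that forces $S'\cap V(G)$ to hit every edge.

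There is also a concrete flaw in your backward-direction argument as stated. You claim an uncovered edge $(v_i,v_j)$ yields an attack pair for which every choice of defenders produces a swap that ``fails to dominate either $v_i$ or $v_j$.'' But after the swap both $v_i$ and $v_j$ are placed \emph{into} the set, so they dominate themselves; the only possible failure is at some third vertex that loses its unique dominator when the two defenders are removed (for instance a private neighbour inside a gadget). Engineering exactly which third vertex becomes undominated when, and only when, the edge is uncovered is the whole content of the reduction, and it is the part your proposal defers. Until a specific gadget is given and the full case analysis of attack pairs is carried out, the statement is not proved.
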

\begin{proof}
	Suppose a set $S \subseteq V$, such that $|S| \le k$ is given as a witness to a yes instance. It can be verified in polynomial time that $S$ is a $2$-SDS of $G$. Hence $2$-SDM is in NP.\par
	We reduce from Vertex-Cover problem to  $ 2 $-SDM for planar graphs.
	We claim that $G$ has a vertex cover of size at most $k$ if and only if $G^*$ has a $ 2 $-SDS of size at most $ r= m+ n+ k + 2$. 
\end{proof}
\noindent To show that the $2$-SDM is NP-complete for doubly chordal graphs, we use a well known NP-complete problem, called Exact Cover by 3-Sets (X3C) \cite{garey}, which is defined as follows. \\[4pt]
\textbf{Exact Cover By 3-Sets (X3C)}\\ [6pt]
\textit{Instance:} A finite set $ X $ with $\vert X \vert = 3q$ and a collection $ C $ of 3-element subsets\\ \hspace*{2.0cm}of $ X $. \\
\textit{Question:} Does $ C $ contain an exact cover for $ X $, that is, a sub collection $ C^\prime \subseteq C $\\ \hspace*{1.6cm} such that every
element in $ X $ occurs in exactly one member of $ C^\prime $?
\begin{theorem}
	$2$-SDM is NP-complete for doubly chordal graphs.
\end{theorem}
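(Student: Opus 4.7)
\medskip\noindent\textbf{Proof proposal.} The plan follows the same template as the previous theorem: membership in NP is the standard $O(|V|^{5})$ verification---iterate over every pair $(u_{1},u_{2})$ of distinct vertices, over every pair of distinct candidate defenders $(v_{1},v_{2})\in (S\cap N[u_{1}])\times (S\cap N[u_{2}])$, and check whether $(S\setminus\{v_{1},v_{2}\})\cup\{u_{1},u_{2}\}$ dominates $G$---and hardness is by polynomial-time reduction from X3C.

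Given an instance $(X,C)$ with $|X|=3q$ and $C=\{c_{1},\dots,c_{m}\}$, the proposed construction of $G^{*}$ has a selection layer on $C$ turned into a clique and equipped with twin-pendant forcing gadgets, each of which consists of leaves $p_{i}^{1},p_{i}^{2}$ attached to $c_{i}$ and enforces $|\{c_{i},p_{i}^{1},p_{i}^{2}\}\cap S|\ge 2$ in any 2-SDS (because the simultaneous attack at $p_{i}^{1},p_{i}^{2}$ admits defenders only from this triple). An element layer on $X$ is connected to $C$ by the incidence edges $c_{i}x_{j}$ iff $x_{j}\in c_{i}$, and is equipped with a carefully chosen element-side gadget whose purpose is to force the ``surplus'' vertices of any minimum 2-SDS to lie in $C$ and cover $X$. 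Double chordality would then be certified by an explicit DPEO that eliminates the pendants first, then the element vertices (simplicial once their gadgets have been stripped, thanks to $C$ being a clique), and finally the vertices of $C$ themselves.

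The equivalence would split as usual. For ($\Rightarrow$), an exact cover $C'\subseteq C$ with $|C'|=q$ yields a candidate $S$ consisting of the gadget-forced vertices together with $C'$, and a case analysis across attack pairs verifies the 2-SDS property. For ($\Leftarrow$), an exchange argument drives any 2-SDS of size at most the budget $r$ (which is the mandatory gadget contribution plus a surplus of $q$) into the canonical form above, after which the pigeonhole principle forces the $q$ surplus vertices to form an exact cover, since each $c_{i}$ covers only three of the $3q$ elements.

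The main obstacle is the design of the element-side gadget, which must reconcile three tensions: (i) keeping $G^{*}$ doubly chordal, so that in particular each $x_{j}$ has a maximum neighbour when eliminated in the DPEO---the naive construction, in which $x_{j}$'s remaining neighbours are a few $c_{i}$'s with incomparable ``private'' neighbourhoods in $X$, fails the maximum-neighbour requirement; (ii) preventing the forced gadget vertices alone from already being a 2-SDS and thereby trivialising the covering obligation---a failure mode of the simple twin-pendant element gadget, since element pair-attacks are then defendable by the element pendants in isolation; and (iii) permitting the intended 2-SDS built from an exact cover to defend pair-attacks at two elements $x_{j},x_{k}$ jointly covered by the same $c_{i}\in C'$, which may demand an alternative defender for at least one of them. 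Candidate remedies I would investigate are dropping the element pendants entirely (so that $x_{j}$-attacks can only be defended from $\{x_{j}\}\cup\{c_{i}\ni x_{j}\}$), replacing them by triangle gadgets whose auxiliary vertices share neighbours with the $c_{i}$'s, or adjoining a controlled auxiliary vertex whose universal-domination potential is neutralised by its own forcing gadget. Navigating this three-way tension is the technical heart of the proof, and is the step I expect to require the most ingenuity.
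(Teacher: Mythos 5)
Your NP-membership argument and the choice of X3C as the source problem agree with the paper, but the proposal has a genuine gap: the reduction is never actually constructed. You yourself flag the element-side gadget --- the one component that must simultaneously (i) give every element vertex a maximum neighbour so that a DPEO exists, (ii) prevent the forced gadget vertices from already being a $2$-SDS on their own, and (iii) let a solution built from an exact cover defend a simultaneous attack on two elements covered by the same chosen triple --- as an unresolved design problem, and you only list candidate remedies without carrying out or verifying any of them. Since both directions of the equivalence and the double chordality of $G^*$ hinge entirely on that gadget, what you have is a proof plan rather than a proof; in particular the claim that a DPEO ``would then be certified'' is exactly the statement that still needs to be established.

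It is also worth noting that your per-subset twin-pendant forcing is structurally heavier than what the paper does: the paper's constructed graph has a $2$-SDS of size at most $l=q+2$, i.e.\ only a constant additive overhead beyond the $q$ subset vertices, whereas your scheme forces two vertices per member of $C$ (a budget of the form $2m+q$ plus the element-side cost). A budget of $q+2$ is the signature of a construction in which both the forcing and the maximum-neighbour requirement are supplied by a small number of global vertices attached to the clique $C$ (and thereby to all elements at distance two): such a near-universal vertex is automatically a maximum neighbour for the element vertices, $C$ being a clique makes each $x_j$ simplicial, and a DPEO eliminating the elements first falls out immediately --- which is precisely the tension you could not resolve with per-$c_i$ pendants and incomparable neighbourhoods inside $X$. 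So beyond being incomplete, the route you chose makes the double-chordality obstacle harder than it needs to be.
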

\begin{proof}
	It is known that the $ 2 $-SDM is a member of NP. To show that it is NP-complete, we propose a polynomial	time reduction from X3C. 
%	Let $X = \{x_1, x_2, \ldots, x_{3q}\}$ and $C = \{C_1,C_2,\ldots, C_m\}$	be an arbitrary instance of X3C. We construct the graph $G(V,E)$ and a positive integer $l$, as an instance of the $ 2 $-SDM problem, where
%	$ V=\{x_1,x_2,\ldots,x_{3q}\} \cup \{c_1, c_2,\ldots, c_m\} \cup \{p_1, p_2, q_1, q_2\} $ and  $E= \{(x_i,c_j) : x_i \in C_j, 1 \le i \le 3q, 1 \le j \le m \} \cup  \{(c_i,c_j) : 1 \le i < j \le m\} \cup \{(p_j,x_i) : 1 \le  j \le 2, 1 \le i \le 3q \} \cup \{(c_j,p_k) : 1 \le j \le m, 1 \le k \le 2 \} \cup \{(p_1,p_2),(p_1,q_1),(p_2,q_2),$ $(p_1,q_2),(p_2,q_1)\}$ and $ l=q+2 $. 
%	Since it admits a DPEO $\{x_1, x_2, \ldots, x_{3q}$, $c_1, c_2, \ldots, c_m, q_1, q_2, p_1, p_2\}$, it can be noted that $G$ is a doubly chordal graph and can be constructed in polynomial time. Let $ C^*=\{c_1, c_2,\ldots, c_m\} $, $ X^*= \{x_1,x_2,\ldots,x_{3q}\}$ and $ CX= C^* \cup X^* $.\par
	We claim that the given instance of X3C $<X,$ $C>$ has an exact cover if and only if the constructed graph $G$ has a $ 2 $-SDS of size at most $l=q+2$. 
%	Suppose $C^\prime \subseteq C$ is an exact cover of $X$, then it can be verified that the set $S= \{c_j : C_j \in C^\prime\} \cup \{p_1, p_2\}$ is a $ 2 $-SDS of size $q+2$ in $G$.\par 
%	Conversely, suppose $S \subseteq V(G)$ be a $ 2 $-SDS of size at most $l=q+2$ in $G$. If $\vert S \cap \{p_1 , p_2, q_1, q_2 \}\vert \le 1,$ then when two attacks simultaneously happen at $q_1$ and $q_2$ then at least one of them cannot be defended. Thus,  $\vert S \cap \{p_1 , p_2, q_1, q_2 \}\vert \ge 2.$ Let $\vert S \cap \{p_1 , p_2, q_1, q_2 \}\vert \ge 3.$ Here, $ \vert S \cap CX\vert \le q-1.$ Clearly, there exists at least 3 vertices from $ X $ for which at most two neighbors in $ S $ and two attacks simultaneously happen at those vertices cannot be defended, which is a contradiction. Thus, $\vert S \cap \{p_1 , p_2, q_1, q_2 \}\vert = 2.$  Since $ S $ is a $ 2 $-SDS, it is clear that $ (S\setminus\{p_1,p_2\}) \cup \{q_1,q_2\} $ is a dominating set of $ G $ and $ S \cap CX $ is a dominating set of $ G[V\setminus\{p_1,p_2,q_1,q_2\}] $ of size at most $ q $. Let $ \vert S \cap X^* \vert \ge 1.$ Then $ (S\setminus\{p_1,p_2\}) \cup \{q_1,q_2\} $ is not a dominating set of $ G $, a contradiction. Therefore, $ \vert S \cap CX \vert=\vert S \cap C^* \vert \le q$ and $\{C_j :c_j \in S \cap C^* \} $ is the solution of given instance of X3C.
\end{proof}
\subsection{Complexity in some subclasses of bipartite graphs}
\noindent To prove the following theorem, we use a restricted version of Exact Cover by 3-Sets, which we denote by RX3C.\\[4pt]
\textbf{Restricted Exact Cover by 3-Sets (RX3C)}\\ [5pt]
{\textit{Instance:} A set $ X $ with $\vert X \vert = 3q$ and a collection $ C $ of 3-element subsets of $ X $\\ \hspace*{0.99cm}with $ \vert C \vert=m >3q$ and each element in $ X $ occurs in at most $ q $ subsets. \\
	\textit{Question:} Does $ C $ contain an exact cover for $ X $? }
\begin{theorem}
	$2$-SDM is NP-complete for star convex bipartite graphs.
\end{theorem}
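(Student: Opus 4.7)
The plan is to reduce from RX3C, mirroring the doubly chordal reduction but arranging the construction so that one part of the bipartition carries a compatible star tree. Membership in NP is standard: given a candidate $S$ with $|S|\le k$, one can verify the 2-SDS condition by iterating over every pair $(u_1,u_2)$ and checking whether some pair $(v_1,v_2)\in S$ witnesses the defence, which runs in polynomial time.

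Given an RX3C instance $\langle X,C\rangle$ with $X=\{x_1,\dots,x_{3q}\}$ and $C=\{c_1,\dots,c_m\}$, I would build a bipartite graph $G^{*}=(A,B,E)$ as follows. Set $A=\{y,c_1,\dots,c_m\}$ with $y$ designated as the centre of the star $T=(A,F)$ whose edges are $(y,c_j)$ for $1\le j\le m$. Set $B=\{x_1,\dots,x_{3q}\}\cup W$, where $W$ is a small private gadget used to force a constant number of vertices into every 2-SDS. The graph edges are: $(c_j,x_i)$ whenever $x_i\in c_j$; $(y,x_i)$ for every element vertex; and the internal edges of $W$, which I will arrange so that each $w\in W$ satisfies either $N_{G^{*}}(w)=\{y\}$ or $y\in N_{G^{*}}(w)$. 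Star-convexity of $G^{*}$ with respect to $T$ is then immediate, since any $b\in B$ with $|N_{G^{*}}(b)|\ge 2$ contains the centre $y$, and so $N_{G^{*}}(b)$ induces a subtree of the star.

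The claim would be that $\langle X,C\rangle$ has an exact cover iff $G^{*}$ has a 2-SDS of size at most $k=q+t$, where $t$ is a constant depending only on the gadget. For the forward direction, from an exact cover $C'\subseteq C$ one assembles $S=C'\cup\{y\}\cup W_{0}$ (with $W_{0}$ the forced part of the gadget); exactness of the cover ensures each $x_i$ has a unique covering defender in $C'$, while $y$ together with the gadget handles pairs lying inside $A$ or inside $W$. For the reverse direction, the gadget is designed so that $y$ and $W_{0}$ belong to every 2-SDS, leaving a budget of $q$ additional vertices. A swap-domination argument on attack pairs drawn from $X$ then shows that these $q$ vertices must all be chosen from $C$ and that together they cover every element; since $|C'|=q$ and every $c_j$ is a 3-element subset, the cover is forced to be exact.

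The main obstacle will be calibrating the gadget $W$: it must force a fixed constant number of vertices into every 2-SDS, remain compatible with the star tree $T$, and rule out shortcut solutions that substitute an element vertex $x_i$ for one of the required set vertices $c_j$ while still respecting the domination-after-swap condition. A pendant bundle attached to $y$, combined with the RX3C guarantees $m>3q$ and ``each element lies in at most $q$ subsets'', should suffice to exclude such shortcuts. A secondary subtlety is that maintaining domination after both guards have moved is what converts the condition ``the chosen $c_j$'s cover $X$'' into ``each $x_i$ is covered exactly once''; designing attack pairs $(x_i,x_{i'})$ so that double-coverage of some element leaves another element undefended after the swap will require careful verification.
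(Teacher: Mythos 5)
Your outline has the right overall shape and, as far as it goes, it matches the strategy one would expect (and that the paper follows): reduce from RX3C, put the set vertices on the side $A$ carrying the star, make the star centre adjacent to all element vertices so that star convexity is automatic, add a constant-size forcing gadget, and aim for a budget of the form $q+O(1)$ (the paper uses $q+8$). But what you have written is a plan, not a proof: every step where the actual difficulty lives is deferred. The gadget $W$ is never specified, so nothing is actually \emph{proved} about which vertices are forced into every $2$-SDS; the forward direction is asserted without checking the defence of pairs involving gadget vertices, pairs inside $A$, and pairs mixing $X$ with the gadget; and the backward direction is reduced to ``a swap-domination argument shows these $q$ vertices must all be chosen from $C$ and cover every element,'' which is precisely the claim that needs a careful case analysis. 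You yourself flag the calibration of $W$ and the exactness argument as open obstacles, so the reduction is not established.

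To see that the gap is genuine and not just a matter of routine verification, note that with $y$ adjacent to all of $X$, plain domination of the element vertices is free, so all the combinatorial force must come from the $2$-secure condition. The distinctness requirement on the two defenders only yields that \emph{at most one} element vertex can have $y$ as its sole closed neighbour in $S$; a counting argument (each $c_j$ supplies non-$y$ defence to $3$ elements, an element vertex in $S$ only to itself, and $3a+b\ge 3q-1$ with $a+b\le q$ forces $a=q$, $b=0$) then rules out the ``shortcut'' solutions, but it still leaves the configuration of $q$ triples covering exactly $3q-1$ elements, with one element doubly covered and one element defended only by $y$. Excluding that configuration is exactly where the domination-after-swap analysis and the precise structure of the gadget (which vertices of $W$ are forced into $S$, and what remains dominated when $y$ and a set vertex both move) must be used, and your proposal gives no argument there. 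A smaller point: you speak of ``internal edges of $W$'' with $W\subseteq B$, which would break bipartiteness; gadget paths pendant from $y$ must alternate sides, and the $A$-side gadget vertices must then be placed as leaves of the star $T$, which works but needs to be said. Until the gadget is fixed and these cases are checked, the equivalence ``exact cover iff $2$-SDS of size at most $q+t$'' is unproven.
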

\begin{proof}
	Clearly $2$-SDM is in NP. The proof is by reduction from RX3C problem. 
%	Let $X=\{x_1$, $x_2,\ldots,x_{3q}\}$ and $C=\{C_1,C_2,\ldots,C_m\}$ be an instance of RX3C problem. Now we construct an instance of $2$-SDM for star convex bipartite graphs. Let $ G $ be the bipartite graph with bipartition $A = X \cup \{a^1,a^2,a^3,c^1,s^1,s^2\}$ and $ B=C^* \cup \{b^1,b^2,b^3,r^1,r^2,t^1\} $ where $ C^*=\{c_1,c_2,$ $\ldots,c_m\} $ and let $E(G) = \{(x_i,c_j) : x_i \in C_j,1 \le i \le q, 1 \le j \le m\} \cup \{(c_j,a^1),$ $(c_j,a^2),(c_j,a^3) : 1\le j\le m\} \cup \{(x_i,r),(x_i,r^1) : 1\le i\le 3q\} \cup \{(a^1,b^1),(b^1,c^1),$ $(a^2,b^2),(a^3,b^3), (r,s),(s,t),$ $(r^1,s^1),(a^1,r^1)\}$. Clearly $G$ can be constructed in polynomial time and $ G $ is a star convex bipartite graph with its associated star $T$ with vertex set $ A $ and edge set $ \{(a^1,x_i) : 1 \le i \le 3q\} \cup \{(a^1,a^2), (a^1,a^3), (a^1,s^1),(a^1,s^2)\}$.  \par
	We claim that RX3C instance $ <X,C> $ has a solution $ C^\prime $ if and only if $G$ has a $ 2 $-SDS of size at most $q+8$. 
\end{proof}
\begin{theorem}
	$2$-SDM is NP-complete for comb convex bipartite graphs.
\end{theorem}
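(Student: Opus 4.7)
My plan is to reduce from RX3C, in the same spirit as the star convex bipartite reduction. Membership of $2$-SDM in NP is immediate since a candidate set of size at most $k$ can be verified in polynomial time.

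Given an RX3C instance $\langle X, C\rangle$ with $X = \{x_1,\ldots,x_{3q}\}$ and $C = \{C_1,\ldots,C_m\}$, I would build a bipartite graph $G = (A, B, E)$ together with an associated tree $T$ on $A$ that is a comb. Place one spine vertex $c_j$ in $A$ for every set $C_j \in C$, in some fixed ordering, and attach a pendant tooth $t_j$ to each $c_j$; this fixes the comb $T$. For each element $x_i \in X$ and each set $C_j$ containing $x_i$, introduce an occurrence vertex $y_{i,j} \in B$ made adjacent only to $c_j$. The crucial design choice is to split each element of $X$ into multiple occurrence vertices so that every element vertex has $|N_G(y_{i,j})|=1$, which is trivially a subtree of $T$ and sidesteps the fact that arbitrary subsets of spine vertices need not form sub-paths. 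On top of this, I would layer a constant-size forcing gadget whose vertices in $B$ each have neighborhoods contained either in a single edge $\{c_j,t_j\}$ of $T$ or in a short sub-path of the spine, so that comb convexity is preserved.

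The equivalence I would then prove, for a threshold of the form $r = q + c$ with $c$ a small constant determined by the forcing gadget, is that $\langle X, C\rangle$ has an exact cover if and only if $G$ admits a $2$-SDS of size at most $r$. In the forward direction, if $C' \subseteq C$ is an exact cover, then $\{c_j : C_j \in C'\}$ together with the designated gadget vertices forms a $2$-SDS: any pair of simultaneous attacks can be jointly defended because the exactness of the cover prevents any occurrence vertex from being orphaned after the two swaps. In the reverse direction, the forcing gadget rules out substituting teeth or occurrence vertices for spine vertices, so any $2$-SDS of size at most $r$ contains exactly $q$ spine vertices, and the pair-defense condition applied to pairs of occurrence vertices belonging to distinct elements forces those spine vertices to cover every element of $X$.

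The main obstacle, as already visible in the star convex case, will be calibrating the forcing gadget so that it is simultaneously (i) comb convex, meaning each gadget vertex in $B$ has a neighborhood inducing a subtree of $T$; (ii) tight enough in size that no slack is available for a smaller set to serve as a $2$-SDS; and (iii) rigid enough that the $2$-secure pair-defense requirement, which is strictly stronger than ordinary secure domination because it binds two guards at once, actually enforces disjointness (and hence exactness) of the chosen sets rather than merely coverage. I expect the gadget to replace the single center vertex of the star convex gadget with a short anchoring sub-path on the comb spine, with the pendant teeth $t_j$ playing the role of private neighbors that prevent any $2$-secure certificate from collapsing onto fewer than $q$ spine vertices.
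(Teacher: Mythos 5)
Your construction has a fatal flaw that already kills the forward direction. Splitting each element $x_i$ into occurrence vertices $y_{i,j}$ with $N_G(y_{i,j})=\{c_j\}$ destroys the covering semantics of the reduction: such a vertex can only be dominated by $c_j$ or by itself, so any dominating set (hence any $2$-SDS) must contain, for every $j\in\{1,\dots,m\}$, at least one vertex of the pairwise disjoint sets $\{c_j\}\cup\{y_{i,j}: x_i\in C_j\}$. Every feasible solution therefore has size at least $m>3q$, so no $2$-SDS of size $q+c$ exists even when an exact cover does; concretely, your proposed set $\{c_j : C_j\in C'\}$ plus a constant-size gadget leaves every $y_{i,j}$ with $C_j\notin C'$ undominated, contrary to your claim that exactness prevents orphaning. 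The standard way out is the opposite of your design choice: keep each element vertex adjacent to \emph{all} set vertices containing it, and restore comb convexity by enlarging neighborhoods (for instance, making every $B$-vertex adjacent to the entire spine, or placing the set vertices on the teeth and adding dummy spine vertices adjacent to all of $B$), and then use the forcing gadget to neutralize the extra domination power of the dummy vertices.

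There is a second, independent counting problem with the teeth. Since the comb is a tree on $A$, the teeth $t_j$ are genuine vertices of $G$ (and $c_j,t_j$ lie on the same side of the bipartition, so the tooth edge of $T$ is not an edge of $G$ and $t_j$ cannot act as a private neighbor of $c_j$ in $G$). The $m$ teeth must themselves be dominated and pair-defended; in your design their only possible $G$-neighbors are gadget vertices with neighborhood $\{c_j,t_j\}$, so a constant-size gadget leaves almost all teeth isolated, while one such gadget vertex per tooth forces $\Omega(m)$ vertices into any dominating set, again overshooting $q+c$. For comparison, the paper proves this theorem by a reduction from X3C (not RX3C) with budget $q+8$ and a constant-size enforcement gadget, keeping the element--set adjacencies intact; your high-level template is the right one, but the element-splitting and the treatment of the teeth must be redesigned before any if-and-only-if argument can go through.
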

\begin{proof}
	Clearly $2$-SDM is in NP. We transform an instance of X3C problem to an instance of $2$-SDM for comb convex bipartite graphs.
%	 as follows. Let $X=\{x_1$, $x_2,\ldots,x_{3q}\}$ and $C=\{C_1,C_2,\ldots,C_m\}$ be an instance of X3C problem.  We construct the graph $G (V, E)$ where $V = \{x_i,x_i^1 : 1 \le i \le 3q\} \cup \{c_j : 1 \le j \le m\} \cup \{a,b,c,a^1,b^1\} \cup \{x_i^2,x_i^3 : 1 \le i\le 3\}$, and $E = \{(x_i,c_j) : x_i \in C_j, 1 \le i \le 3q, 1 \le j \le m\} \cup \{(c_j,x_i^\prime) : 1\le j\le m, 1\le i \le {3q} \} \cup \{(x_i,a),(x_i,a^1),(x_i^1,a),(x_i^1,a^1) : 1\le i\le 3q\} \cup \{(a,b),(b,c),(a^1,b^1),(x_3^1,x_3^2)\} \cup \{(x_i^1,x_i^2),(x_i^2,x_i^3) : 1 \le i \le 2\}$. It can be verified that $G$ is a comb convex bipartite graph with its associated comb $T=(A, F)$,  with backbone $\{x_i^\prime : 1\le i \le 3q\} \cup \{x_1,x_2\}$ and teeth $\{x_i: 3 \le i \le 3q\} \cup \{x_1^3,x_2^3,b,b^1\}$. Note that the construction of graph $G$ can be done in polynomial time. \par
Next we show that X3C instance $ <X,C> $ has a solution $ C^\prime $ if and only if $G$ has a $ 2 $-SDS of size at most $q+8$.
% Suppose that X3C instance $ <X,C> $ has a solution $ C^\prime$. Let $ D=\{c_j : C_j \in C^\prime\} \cup \{x_i^1,x_i^2 : 1 \le i \le 2 \} \cup \{x_3^1,a,b,a^1\}.$ It can be easily verified that $ D$ is a $ 2 $-SDS of $ G $ with size $ q+8$. \par
%	Conversely, suppose $S \subseteq V(G)$ be a $ 2 $-SDS of size at most $l=q+8$ in $G$. It can be observed that for any $ i$, $ 1 \le i \le 2 $, $ \vert S \cap \{x_i^1,x_i^2,x_i^3\} \vert \ge 2.$ Similarly, $ \vert S \cap \{x_3^1,x_3^2\} \vert \ge 1 $, $ \vert S \cap \{a,b,c\} \vert \ge 2 $, and $ \vert S \cap \{a^1,b^1\} \vert \ge 1 $. Hence $ \vert S \cap CX \vert \le q$. Let $ \vert S \cap X \vert \ge 1.$ Then there exist at least two vertices $ x_k, x_l $ such that $ S \cap (N(x_k) \cup N(x_l)) = \{a,a^1\}$. Now if two attacks simultaneously occur at those vertices cannot be defended, which is a contradiction. Therefore, $ \vert S \cap X \vert =0$ and $ \{C_j : c_j \in S \cap C^*\}$ is the solution for an instance of X3C. 
\end{proof}
\subsection{Parameterized Complexity}
Now, we investigate the parameterized complexity of $ 2 $-SDM problem for split graphs. In \cite{endm}, $ 2 $-SDM has been proved as NP-complete for split graphs. 
\noindent The decision version of domination problem is defined as follows.\\[4pt]
\noindent 
\textbf{Dominating Set Decision Problem (DM)} \\ [6pt]
\textit{Instance:} A simple, undirected graph $G$ $(V, E)$ and a positive integer $k$.\\
\textit{Question:}  Does there exist a dominating set of size at most $ k $ in $ G $ ?\\[4pt]
\noindent In \cite{vraman}, the DM problem has been proved as W[2]-complete, even when restricted to split graphs.
\begin{theorem}
	$2$-SDM is W[2]-hard for split graphs.
\end{theorem}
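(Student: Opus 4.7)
The plan is to establish W[2]-hardness via a parameterized reduction from the Dominating Set problem (DM) on split graphs---W[2]-complete by \cite{vraman}---to $2$-SDM on split graphs. The reduction will map an instance $(G, k)$ of DM with $G = (V, E)$ a split graph and $V = K \cup I$ the split partition to an instance $(G', k')$ of $2$-SDM, where $G'$ is a split graph constructible in polynomial time and $k' = k + c$ for a small constant $c$; since this is an fpt-reduction, W[2]-hardness transfers.

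The construction I would carry out augments $G$ with a constant-size gadget on the clique side: add two new vertices $w_1, w_2$ adjacent to every vertex of $K$ and to each other (so that $K' := K \cup \{w_1, w_2\}$ remains a clique), together with pendant vertices $p_1, p_2$ with $p_i$ adjacent only to $w_i$ (so $I' := I \cup \{p_1, p_2\}$ remains independent). The resulting $G' = (K' \cup I', E')$ is a split graph of size $|V|+4$, built in polynomial time, and the new parameter is $k' := k + 2$.

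For the forward direction, given a dominating set $D$ of $G$ with $|D| \le k$, I would form $S := D \cup \{w_1, w_2\}$ of size at most $k+2$ and verify the $2$-SDS property by a case analysis on attack pairs: the two clique vertices $w_1, w_2$ serve as universal backup defenders, and swapping either of them leaves the domination of $V(G)$ by $D$ intact. For the reverse direction, attacks at $(p_1, p_2)$ force two distinct defenders drawn from $\{p_1, w_1\} \cup \{p_2, w_2\}$, giving $|S \cap \{w_1, w_2, p_1, p_2\}| \ge 2$, and a swap-preserving substitution---replacing each $p_i \in S$ by $w_i$, which is valid because $N_{G'}[p_i] \subsetneq N_{G'}[w_i]$---allows us to assume $\{w_1, w_2\} \subseteq S$, yielding $|D| \le k$ for $D := S \setminus \{w_1, w_2\} \subseteq V(G)$.

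The main obstacle, and the technical heart of the proof, is verifying that $D$ is a dominating set of $G$. For $u \in I$ this is routine because $N_{G'}[u] = N_G[u]$ (since $w_1, w_2$ are adjacent only to $K$, not to $I$), so $u$ must be dominated by $N_G[u] \cap S = N_G[u] \cap D$. For $v \in K$, however, $v$ could a priori be dominated in $G'$ only via the new edges $v w_1$ or $v w_2$; to rule this out I would apply the $2$-SDS swap condition to the attack pair $(v, p_1)$, using that $N_{G'}[p_1] \cap S \subseteq \{p_1, w_1\}$ to constrain the defender of $p_1$, which in turn forces the defender of $v$ to lie in $V(G)$ after requiring the swapped set to dominate $K$ (a clique, so that a single $K$-vertex in $D$ suffices). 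A careful case analysis, possibly combined with a localized modification of $S$ that does not increase its cardinality when the gadget happens to over-dominate $K$, completes the equivalence $\gamma(G) \le k \Leftrightarrow \gamma_{2s}(G') \le k + 2$ and yields the desired fpt-reduction.
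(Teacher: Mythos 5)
Your high-level plan (an fpt-reduction from DM on split graphs with budget $k+2$) is the same as the paper's, but the gadget you chose breaks the forward direction. Concretely, let $G$ be the split graph with $K=\{c\}$, $I=\{y_1,y_2\}$ and edges $cy_1,cy_2$, so $\gamma(G)=1$ and $(G,1)$ is a yes-instance of DM. In your $G'$ the attack pair $(y_1,y_2)$ already requires two \emph{distinct} defenders with $v_1\in N[y_1]\cap S\subseteq\{y_1,c\}$ and $v_2\in N[y_2]\cap S\subseteq\{y_2,c\}$, forcing $\vert S\cap\{y_1,y_2,c\}\vert\ge 2$, while the pair $(p_1,p_2)$ forces $\vert S\cap\{p_1,p_2,w_1,w_2\}\vert\ge 2$; these two sets are disjoint, so every $2$-SDS of $G'$ has size at least $4>k+2=3$. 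Hence the claimed equivalence $\gamma(G)\le k \Leftrightarrow \gamma_{2s}(G')\le k+2$ is false, and in particular $D\cup\{w_1,w_2\}$ is not a $2$-SDS: because $w_1,w_2$ have no neighbors in $I$, they are not the ``universal backup defenders'' your case analysis relies on. Two independent-set vertices whose only neighbor in $D$ is a common clique vertex cannot be given distinct defenders, and even when distinct defenders exist the swap can destroy domination of $I$ (in the example, the attack $(p_1,y_1)$ forces the defenders $w_1$ and $c$, and the swapped set $\{w_2,p_1,y_1\}$ leaves $y_2$ undominated).

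The missing idea is that the two added guard vertices must be adjacent to \emph{every} vertex of $V(G)$, not only to the clique side; this still yields a split graph (the guards are absorbed into the clique), and it is exactly the pattern visible elsewhere in the paper, e.g.\ the inapproximability construction where $w_1,w_2$ are joined to all of $V$ and pendant paths are attached to force guard-like vertices into any small $2$-SDS. With universal guards the forward swap $(S\setminus\{w_1,w_2\})\cup\{u_1,u_2\}$ remains dominating because every vertex of $V$ re-dominates $w_1,w_2$, while the pendant structure drives the reverse direction; both directions then have to be re-verified so that the budget comes out to $k+2$. Your reverse direction as written also leans on the flawed adjacency choice (the identity $N_{G'}[u]=N_G[u]$ for $u\in I$) and on an unproved exchange step (replacing $p_i$ by $w_i$ inside a $2$-SDS), so it would need to be redone for the corrected construction.
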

\begin{proof}
	The proof is by reduction from DM problem. 
%	Given an instance $G(V,E)$ of DM problem for split garphs, we construct a graph $G^*(V^*,E^*)$, where $V^* = V \cup \{ x_1 , x_2, y_1, y_2 \}$, and $E^* = E \cup \{ (x_1,u),(x_2,u) : u \in V\} \cup \{(x_1, y_1), (x_2, y_1),$ $(x_2, y_2), (x_1, x_2)\}$.
	%\{( x_1, y_1 ),(x_1, y_2),( x_2, y_1 ),(x_2, y_2)\}$.\\
%	\noindent Clearly, $G^*$ can be constructed from $G$ in polynomial time and $ G^* $ is a split graph. \par 
	We show that $G$ has a dominating set of size at most $k$ if and only if $G^*$ has a $ 2 $-SDS of size at most $ r= k + 2$. 
\end{proof}
\noindent Since split graphs form a proper subclass of chordal graphs, the following corollary is immediate.
\begin{cor}
	$2$-SDM is W[2]-hard for chordal graphs.
\end{cor}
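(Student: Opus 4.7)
The plan is a parameterized reduction from the $W[2]$-complete Dominating Set (DM) problem on split graphs \cite{vraman} to 2-SDM on split graphs. Given an instance $(G,k)$ of DM with split partition $V(G) = K \cup I$, the construction produces in polynomial time a split graph $G^*$ and sets $r = k+2$ so that $G$ has a dominating set of size at most $k$ iff $G^*$ has a 2-SDS of size at most $r$. Since the parameter shifts by the constant $2$, this is a valid parameterized reduction and $W[2]$-hardness transfers; the corollary for chordal graphs is then immediate since split graphs are a subclass of chordal graphs.

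For the construction of $G^*$, I would augment $G$ by two new vertices $x_1, x_2$ together with a small constant-size forcing gadget. A natural choice is to make $x_1, x_2$ adjacent to every vertex of $K$ and to each other (so $K \cup \{x_1, x_2\}$ is a clique and $I$ remains independent, preserving the split structure), and to attach a handful of pendant-like vertices whose job is to force any 2-SDS to spend at least two vertices outside $V(G)$. The pair $x_1, x_2$ can then serve as two universal defenders that can be swapped in for most attacked pairs.

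The forward direction takes a dominating set $D$ of $G$ with $|D|\le k$ and shows that $S^* = D \cup \{x_1, x_2\}$, of size $k+2$, is a 2-SDS of $G^*$. Domination is immediate. For the 2-SDS property I would case-analyse on the pair of attacks $(u_1,u_2)$: in the generic case, swap $(x_1,x_2)$ for $(u_1,u_2)$; when a naive swap would strand a gadget pendant, use a vertex of $D$ as one defender and $x_1$ or $x_2$ as the other. Conversely, given a 2-SDS $S^*$ with $|S^*| \le k+2$, a pair attack through the gadget forces $|S^* \setminus V(G)| \ge 2$, so $|S^* \cap V(G)| \le k$; by choosing the gadget so that $x_1, x_2$ do not dominate $I$, one forces $S^* \cap V(G)$ to dominate $I$ in $G$, which for connected split graphs is equivalent to being a dominating set of $G$ of size at most $k$.

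The hard part is calibrating the gadget so that both directions go through simultaneously. It must be rigid enough that every 2-SDS provably pays two extra vertices outside $V(G)$, yet flexible enough that for every pair of attacks the defender swap keeps each auxiliary vertex dominated. The main technical subtlety is handling attack pairs $(u_1,u_2) \in V(G) \times V(G)$ where a straight $(x_1,x_2)$ swap fails to dominate some gadget vertex: the defender choice must then mix a $D$-vertex with one of $x_1, x_2$, and the argument has to ensure that such a choice is always available.
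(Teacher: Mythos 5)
Your concluding step is exactly the paper's proof of this corollary: the $W[2]$-hardness for split graphs (the paper's Theorem~5, obtained by a parameterized reduction from DM on split graphs with parameter $r=k+2$) transfers to chordal graphs because split graphs are a subclass of chordal graphs. The long middle portion of your proposal merely re-derives that split-graph theorem and is not needed here -- you may simply invoke it -- which is fortunate, since as written your re-derivation is incomplete (the forcing gadget is never pinned down and you acknowledge the case analysis for defender swaps is uncalibrated), so it could not stand on its own without that further work.
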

\subsection{Complexity in bounded tree-width graphs}
Let $ G $ be a graph, $ T $ be a tree and $v\ $ be a family of vertex sets $ V_t \subseteq V (G) $ indexed by the vertices $ t $ of $ T $ . 
The pair $ (T, v\ ) $ is called a tree-decomposition of $ G $ if it satisfies the following three conditions:
(i) $ V(G)  =  \bigcup_{t \in V(T)} V_t $,
(ii) for every edge $ e \in E(G) $ there exists a $ t \in V(T)$ such that both ends of $ e $ lie in $ V_t $,
(iii) $V_{t_1} \cap   V_{t_3}  \subseteq V_{t_2}$ whenever $ t_1$, $t_2$, $t_3 \in V(T) $ and $ t_2 $ is on the path in $ T $ from $ t_1 $ to $ t_3 $.
The width of $ (T, v\ ) $ is the number $ max\{\vert V_t \vert-1 : t\in T \}$, and the tree-width $ tw(G) $ of $ G $ is the minimum width of any tree-decomposition of $ G $. By Courcelle's Thoerem, it is well known that every graph problem that can be described by counting monadic second-order logic (CMSOL) can be solved in linear-time in graphs of bounded tree-width, given a tree decomposition as input \cite{courc}. We show that $ 2 $-SDM problem can be expressed in CMSOL. 
\begin{theorem}[\textit{Courcelle's Theorem}](\cite{courc})\label{cmsol1}
	Let $ P $ be a graph property expressible in CMSOL and let $ k $ be
	a constant. Then, for any graph $ G $ of tree-width at most $ k $, it can be checked in
	linear-time whether $ G $ has property $ P $.
\end{theorem}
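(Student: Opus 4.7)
The plan is to prove the theorem by reducing CMSOL model-checking on graphs of bounded tree-width to the classical problem of CMSOL model-checking on finite labeled trees, and then appealing to the fact that the latter is solved in linear time by a fixed finite tree automaton. First I would normalize the given tree decomposition $(T, v)$ into a \emph{nice} tree decomposition of the same width $k$, in which every internal node is of one of four simple types (introduce-vertex, introduce-edge, forget-vertex, join); this reshaping takes linear time and produces a rooted tree with $O(|V(G)|)$ nodes. I would then interpret the nice decomposition as the syntax tree $t_G$ of a term over the finite Bauderon--Courcelle signature $F_k$ of operations on $k$-boundaried graphs, so that $t_G$ becomes a finite tree labeled by symbols from an alphabet $\Sigma_k$ whose size depends only on $k$.

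Next I would carry out the central reduction: for every CMSOL sentence $\varphi$ in the vocabulary of graphs, construct a CMSOL sentence $\widehat{\varphi}$ in the vocabulary of $\Sigma_k$-labeled trees such that $G \models \varphi$ if and only if $t_G \models \widehat{\varphi}$. The key idea is that every vertex and every edge of $G$ is introduced at a unique node of $t_G$, so elements of $G$ are in definable bijection with nodes of $t_G$ of particular label types, and set quantification in $G$ becomes set quantification over the corresponding nodes of $t_G$. Adjacency in $G$ is expressible because whether two vertices are adjacent is witnessed by the unique introduce-edge node joining their boundary copies. The modular counting quantifier ``$|X| \equiv r \pmod{p}$'' passes through the translation unchanged, since the bijection between graph elements and their introducer nodes preserves cardinality.

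Having obtained $\widehat{\varphi}$, I would invoke the Doner--Thatcher--Wright theorem in its counting extension (due to Courcelle) which asserts that every CMSOL-definable property of finite labeled trees over a fixed finite alphabet is recognized by a deterministic bottom-up finite tree automaton. This yields an automaton $A_{\widehat{\varphi}}$ whose number of states is a constant $c(\varphi, k)$ depending only on $\varphi$ and $k$. Running $A_{\widehat{\varphi}}$ on $t_G$ is a single bottom-up pass: the state at each node is computed from the states of its (at most two) children in $O(1)$ time, so the whole computation takes $O(|t_G|) = O(|V(G)|)$ time, and $G$ has property $P$ iff the root state is accepting.

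The hardest step will be the formal verification of the graph-to-tree translation, where one must define, uniformly for every CMSOL connective and quantifier, how vertex variables, edge variables, vertex-set variables, and edge-set variables of $\varphi$ are simulated by node or node-set variables of $\widehat{\varphi}$, and then prove semantic equivalence by induction on the structure of $\varphi$. A secondary subtlety is that the construction of $A_{\widehat{\varphi}}$ from $\widehat{\varphi}$ incurs non-elementary blowup in $|\widehat{\varphi}|$, but this is harmless here because $\varphi$ and $k$ are fixed, so the blowup is absorbed into the hidden constant in the $O(|V(G)|)$ linear-time bound.
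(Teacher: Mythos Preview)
The paper does not prove this statement at all: Theorem~\ref{cmsol1} is Courcelle's Theorem, quoted verbatim from~\cite{courc} and used as a black box to derive Theorem~\ref{cmsol2} and the linear-time solvability of $2$-SDM on bounded tree-width graphs. There is therefore no ``paper's own proof'' to compare your proposal against.

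That said, your sketch is a faithful outline of the standard proof of Courcelle's Theorem: normalize to a nice tree decomposition, encode it as a term over a finite $k$-boundaried-graph signature, perform an MSO interpretation of the graph into the labeled tree, and then invoke the (counting extension of the) Doner--Thatcher--Wright equivalence between CMSOL on trees and finite tree automata. The identification of the inductive translation of formulas as the delicate step, and of the non-elementary automaton blowup as harmless for fixed $\varphi$ and $k$, are both accurate. If the intent of the exercise was to supply a proof the paper omits, your plan is sound; if the intent was to reproduce what the paper does, the correct answer is simply that the paper cites the result without proof.
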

\begin{theorem}\label{cmsol2}
	Given a graph $ G $ and a positive integer $ k $, $ 2 $-SDM can be expressed in CMSOL.
\end{theorem}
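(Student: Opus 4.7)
The plan is to exhibit an explicit CMSOL sentence $\Phi(S)$ that is satisfied exactly when $S \subseteq V(G)$ is a 2-secure dominating set, and to augment it with the standard size constraint $|S| \le k$. I would first record a few syntactic abbreviations, all manifestly in CMSOL: adjacency $\mathrm{adj}(u,v) \equiv (u,v) \in E(G)$, closed neighbourhood $u \in N[v] \equiv (u = v) \lor \mathrm{adj}(u,v)$, and ordinary domination of a set variable $T$,
\[
\mathrm{Dom}(T)\ \equiv\ \forall x\,\bigl(x \in T\ \lor\ \exists y\,(y \in T \land \mathrm{adj}(x,y))\bigr).
\]
All of these use only first-order quantification over vertices together with monadic set membership.

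The delicate step is expressing "$(S \setminus \{v_1,v_2\}) \cup \{u_1,u_2\}$ is dominating" without introducing a fresh set variable for the swapped set. To this end I define the membership abbreviation
\[
\mu(y;S,v_1,v_2,u_1,u_2)\ \equiv\ \bigl(y \in S \land y \neq v_1 \land y \neq v_2\bigr)\ \lor\ y = u_1\ \lor\ y = u_2,
\]
and obtain a modified domination predicate $\mathrm{Dom}^{*}(S,v_1,v_2,u_1,u_2)$ by replacing every "$z \in T$" in the body of $\mathrm{Dom}(T)$ by $\mu(z;S,v_1,v_2,u_1,u_2)$. This inline substitution preserves the CMSOL syntax and exactly captures the intended domination by the swapped set.

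The full 2-SDS sentence is then
\[
\Phi(S)\ \equiv\ \mathrm{Dom}(S)\ \land\ \forall u_1\,\forall u_2\,\Bigl(u_1 \neq u_2\ \to\ \exists v_1\,\exists v_2\,\psi(S,u_1,u_2,v_1,v_2)\Bigr),
\]
where $\psi$ abbreviates $v_1 \in S \land v_2 \in S \land v_1 \neq v_2 \land v_1 \in N[u_1] \land v_2 \in N[u_2] \land \mathrm{Dom}^{*}(S,v_1,v_2,u_1,u_2)$. Since $\Phi$ quantifies only over vertices and one monadic set $S$, it is a CMSOL formula. The cardinality requirement $|S| \le k$ is absorbed either by invoking the optimisation extension of Courcelle's theorem, which computes a minimum $S$ satisfying any MSOL-definable predicate in linear time on graphs of bounded tree-width, or by tracking set sizes during the dynamic programme along the tree decomposition; either way, combining with Theorem~\ref{cmsol1} yields the desired linear-time decision procedure.

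The main obstacle I anticipate is purely notational: faithfully encoding the swap $(S \setminus \{v_1,v_2\}) \cup \{u_1,u_2\}$ inside the formula without introducing additional monadic quantifiers, which would inflate the logical complexity of $\Phi$. The abbreviation $\mu$ resolves this cleanly by expanding membership in the swapped set as a Boolean combination of first-order equalities and membership in $S$, so the final sentence never leaves CMSOL, and the theorem follows.
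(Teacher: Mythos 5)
Your proposal is correct and follows essentially the same route as the paper: express ordinary domination as a CMSOL predicate and then state the $2$-secure condition with first-order quantification over the attacked pair and the defenders, handling the swapped set $(S\setminus\{v_1,v_2\})\cup\{u_1,u_2\}$ by substitution rather than a new set variable. In fact your rendering is slightly more careful than the paper's displayed formula, since you explicitly require $v_1,v_2\in S$ and swap the vertices in the direction prescribed by the definition (remove the defenders, add the attacked vertices), whereas the paper's formula writes $Dominating((S\setminus\{x,y\})\cup\{p_1,p_2\})$ with the roles apparently interchanged.
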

\begin{proof}
	First, we present the CMSOL formula which expresses that the graph $ G $ has a dominating set of size at most $ k.$
	{\small 	$$Dominating(S)= (\vert S \vert \le k)  \land (\forall p)((\exists q)(q\in S \land adj(p,q))) \lor (p\in S)$$}
	where $ adj(p, q) $ is the binary adjacency relation which holds if and only if, $ p, q $ are two adjacent vertices of $ G.$
	$ Dominating(S) $ ensures that for every vertex $ p \in V $, either $ p\in S $ or $ p $ is adjacent to a vertex in $ S$ and the cardinality of $ S $ is at most $ k.$ 
	%	$$ \exists S, S \subseteq V, \vert S \vert \le k, \forall p(p \in V \rightarrow (\exists q(q \in V \land q \in S \land adj(p,q)) \lor p \in S )) $$
	%	For a set $ S \subseteq V,$ the induced subgraph $ G[S] $ is disconnected if and only if the set $ S $ can be partitioned into two sets $ S_1 $ and $ S_2 $ such that there is no edge between a vertex in $ S_1 $ and a vertex in $ S_2 $. The CMSOL formula to express that the induced subgraph $ G[S] $ is connected as follows.
	%	{\small $$Connected(S)= \neg (\exists S_1, S_1 \subseteq S, \neg(\exists e \in E, \exists u \in S_1, \exists v \in S \setminus S_1, (inc(u,e) \land inc(v,e)) ))$$}
	%	where $ inc(v, e) $ is the binary incidence relation which hold if and only if edge $ e $ is incident to vertex $ v $ in $ G.$
	Now, by using the above CMSOL formula we can express $ 2 $-SDM in CMSOL formula as follows.
	\begin{center}
		{\small 	$ 2$-SDM($ S )=Dominating(S) \land (\forall x)(\forall y)((\exists p_1) (\exists p_2)(p_1 \in N[x] \land p_2 \in N[y]\setminus\{p_1\}$\\ \hspace*{1.9cm} $\land$ 	$Dominating((S \setminus \{x,y\}) \cup \{p_1,p_2\}))  $} 
	\end{center}

	%	$ \exists S, S \subseteq V, \vert S \vert \le k, ((\forall p(p \in V \rightarrow (\exists q(q \in V \land q \in S \land adj(p,q)) \lor p \in S ))) \land (\neg (\exists S_1, S_1 \in S, \neg(\exists e \in E, \exists u \in S_1, \exists v \in S \setminus S_1, (inc(u,e) \land inc(v,e)) ))) \land (\forall x(x \in V \setminus S \rightarrow \exists y (y \in V \land y \in S \land adj(x,y) \land 
	%	(\forall z(z \in V \rightarrow (\exists r(r \in V \land r \in (S\setminus\{y\} ) \cup \{x\} \land adj(z,r)) \lor z \in (S\setminus\{y\} ) \cup \{x\} ))) \land (\neg (\exists S_1^\prime, S_1^\prime \in (S\setminus\{y\} ) \cup \{x\}, \neg(\exists e^\prime \in E, \exists u^\prime \in S_1^\prime, \exists v^\prime \in ((S\setminus\{y\} ) \cup \{x\}) \setminus S_1^\prime, (inc(u^\prime,e^\prime) \land inc(v^\prime,e^\prime)) ))) ) ))).$ \\
	\noindent Therefore, $ 2 $-SDM can be expressed in CMSOL.
\end{proof}
\noindent Now, the following result is immediate from Theorems \ref{cmsol1} and \ref{cmsol2}.
\begin{theorem}
	$ 2 $-SDM can be solvable in linear time for bounded tree-width graphs. 
\end{theorem}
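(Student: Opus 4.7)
The plan is to obtain this as a direct corollary of Theorems \ref{cmsol1} and \ref{cmsol2}. Since Theorem \ref{cmsol2} already provides an explicit CMSOL formula $2\text{-SDM}(S)$ that holds precisely when $S$ is a $2$-secure dominating set of $G$ of size at most $k$, the natural approach is to feed this formula as the property $P$ into Courcelle's Theorem.

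First, I would argue that the existential closure $(\exists S)\, 2\text{-SDM}(S)$ is itself a CMSOL sentence describing the graph property ``$G$ admits a $2$-secure dominating set of size at most $k$''. The counting machinery of CMSOL is exactly what legitimizes the cardinality constraint $|S|\le k$ appearing inside $Dominating(S)$; the remainder of the formula only uses ordinary set and vertex quantifiers together with the adjacency predicate, both of which are standard ingredients of monadic second-order logic on graphs.

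Second, given an input graph $G$ of tree-width at most a constant $c$ together with the integer $k$, I would invoke Theorem \ref{cmsol1} on this sentence. Courcelle's Theorem guarantees a linear-time decision procedure on $G$ for any CMSOL-expressible property once a tree-decomposition of width at most $c$ is available; such a decomposition can itself be produced in linear time by Bodlaender's algorithm, so no preprocessing penalty is incurred. Combining these two ingredients yields the claimed linear-time algorithm.

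I do not foresee a serious obstacle: the technical content has already been packaged into Theorem \ref{cmsol2}, and Courcelle's Theorem supplies the algorithmic part in a black-box manner. The only point that perhaps warrants an explicit remark is that the bound $|S|\le k$ is genuinely expressible in CMSOL (via the counting extension), which is why CMSOL rather than plain MSOL is invoked.
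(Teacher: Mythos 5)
Your proposal is correct and matches the paper exactly: the theorem is obtained as an immediate consequence of Theorem \ref{cmsol1} (Courcelle's Theorem) applied to the CMSOL formula from Theorem \ref{cmsol2}. The extra remarks about Bodlaender's algorithm and the counting extension handling $|S|\le k$ are sensible but not needed beyond what the paper already states.
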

\section{Approximation Results}
In this section, we obtain upper and lower bounds on the approximation ratio of the M2SDS problem. We also show that the M2SDS problem is APX-complete for graphs with maximum degree $ 4 $.
\subsection{Approximation Algorithm}
Here, we propose a $\Delta(G)+1$ approximation algorithm for the M2SDS problem. In this, we will make use of two known optimization problems, MINIMUM 2-DOMINATION and MINIMUM DOMINATION. The following two theorems are the approximation results which have been obtained for these two problems.
\begin{theorem}\label{app2dom}(\cite{ktuple})
	The MINIMUM k-TUPLE DOMINATION problem in a graph with maximum degree $ \Delta(G) $ can be approximated with an approximation ratio of $1+ \ln (\Delta(G)+1). $
\end{theorem}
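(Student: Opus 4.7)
The plan is to derive this as an instance of the standard greedy submodular cover framework. I would first formulate $k$-tuple domination as a coverage problem: for each vertex $v \in V$ there is a ``demand'' of $k$ units that can be met by any $k$ vertices of $N[v]$ included in the dominating set $D$. Concretely, define a coverage function $f : 2^{V} \to \mathbb{Z}_{\geq 0}$ by
\[
f(D) = \sum_{v \in V} \min\{k,\, |N[v] \cap D|\},
\]
and verify that $f$ is monotone, integer valued, and submodular, and that $f$ attains its maximum value $k|V|$ exactly on the $k$-tuple dominating sets of $G$.

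The second step is to run the natural greedy algorithm: start with $D = \emptyset$ and repeatedly add a vertex $u$ maximizing the marginal gain $f(D \cup \{u\}) - f(D)$ until $f(D) = k|V|$. The key quantity controlling the approximation ratio is the largest possible marginal gain of a single element, which satisfies $f(\{u\}) - f(\emptyset) \le |N[u]| \le \Delta(G)+1$, since adding $u$ contributes at most one unit toward the demand of each vertex in $N[u]$ and contributes nothing elsewhere.

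Finally, invoking Wolsey's analysis of the greedy algorithm for integer valued monotone submodular cover, we obtain an approximation ratio of at most $H\bigl(\max_u (f(\{u\}) - f(\emptyset))\bigr) \le 1 + \ln(\Delta(G)+1)$, where $H(\cdot)$ denotes the harmonic number. The main obstacle is the careful verification that $f$ is submodular: the truncation by $k$ is essential, since without it the function would merely be modular and would fail to encode feasibility, yet with the truncation one must check that the marginal gain of adding $u$ to a smaller set $D$ is at least that of adding it to a larger set $D'$ coordinate-by-coordinate in the sum. Once submodularity and the cap on single-element marginals are established, the stated $1 + \ln(\Delta(G)+1)$ ratio is immediate from Wolsey's bound.
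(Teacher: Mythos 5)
Your argument is correct and is essentially the standard one: the paper states this result without proof as a citation to Klasing and Laforest, who obtain it by reducing $k$-tuple domination to minimum set multicover with sets $N[v]$ and running the greedy algorithm, which is exactly your submodular-cover formulation with $f(D)=\sum_v \min\{k,|N[v]\cap D|\}$ and Wolsey's $H(\Delta(G)+1)\le 1+\ln(\Delta(G)+1)$ bound. The only point worth adding is the implicit feasibility assumption $\delta(G)\ge k-1$, without which $f$ never reaches $k|V|$ and no $k$-tuple dominating set exists.
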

\begin{theorem}(\cite{clrs})\label{appdom}
	The MINIMUM DOMINATION problem in a graph with maximum degree $ \Delta(G) $ can be approximated with an approximation ratio of $1+\ln (\Delta(G)+1).$
\end{theorem}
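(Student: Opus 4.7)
The plan is to prove this by reducing Minimum Dominating Set to Minimum Set Cover and then invoking the classical greedy algorithm analysis for the latter, which is the textbook approach referenced in \cite{clrs}.

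First, I would set up the reduction. Given the graph $G$, construct a Set Cover instance whose universe is $V(G)$ and whose family consists of the sets $\mathcal{F}_v = N_G[v]$ for every $v \in V(G)$. A subset $D \subseteq V(G)$ is a dominating set of $G$ if and only if $\{\mathcal{F}_v : v \in D\}$ covers $V(G)$, and the cardinalities coincide. Consequently, any $\alpha$-approximate solution for this Set Cover instance yields an $\alpha$-approximate dominating set, so it suffices to approximate the Set Cover instance within the claimed ratio.

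Next, I would run the standard greedy algorithm for Set Cover: while some element of the universe is uncovered, pick the set that covers the largest number of still-uncovered elements, and add the corresponding vertex to the output. This clearly runs in polynomial time. The classical charging analysis assigns a cost of $1/r$ to each element newly covered when a set covering $r$ uncovered elements is chosen; summing these costs against an optimal cover shows that the greedy solution has size at most $H(d) \cdot OPT$, where $d$ is the maximum set size and $H(d) = \sum_{i=1}^{d} 1/i$.

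Finally, in our reduction $|\mathcal{F}_v| = |N_G[v]| \le \Delta(G)+1$, hence $d \le \Delta(G)+1$. Combined with the elementary bound $H(d) \le 1 + \ln d$ (obtained by comparing the harmonic sum to $\int_1^d dx/x$), this gives the stated approximation ratio of $1 + \ln(\Delta(G)+1)$. There is no real obstacle here: the reduction is one line, and the only technical piece is the harmonic-sum charging argument, which is a textbook result that I would simply cite from \cite{clrs} rather than re-derive.
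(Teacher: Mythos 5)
Your proposal is correct and is precisely the standard argument behind the cited result: the paper itself offers no proof of Theorem~\ref{appdom}, merely attributing it to \cite{clrs}, and the domination-to-set-cover reduction followed by the greedy $H(d)\le 1+\ln d$ analysis with $d\le\Delta(G)+1$ is exactly what that citation refers to. Nothing is missing.
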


By Theorems \ref{app2dom} and \ref{appdom}, let us consider APPROX-2-DOM-SET and APP-ROX-DOM-SET are the approximation algorithms to approximate the solutions for MINIMUM 2-DOMINATION and MINIMUM DOMINATION with approximation ratios of $ 1+\ln (\Delta(G)+1)$ and $ 1+\ln (\Delta(G)+1)$ respectively.\par
Now, we propose an algorithm APPROX-2SDS to produce an approximate solution for the M2SDS problem. In APPROX-2SDS, first we compute 2-dominating set $ D_2 $ of a given graph $ G$ using APPROX-2-DOM-SET. Now let $ G^\prime = G[V \setminus D_2]$. By using APPROX-DOM-SET, we compute dominating set $ D^\prime $ of $ G^\prime$. Let $ D=D_2 \cup D^\prime.$ It can be easily observed that for any two vertices $ u_1,u_2 \in V $ there exist two vertices $ v_1 \in D \cap N[u_1] $ and $ v_2 \in D \cap N[u_2] $ such that $ (D \setminus \{v_1,v_2\}) \cup \{u_1,u_2\} $ is a dominating set of $ G.$ Therefore, $ D $ is a $ 2 $-SDS of $ G. $ 
\renewcommand{\algorithmicrequire}{\textbf{Input:}}
\renewcommand{\algorithmicensure}{\textbf{Output:}}
\begin{algorithm}
	\caption{APPROX-2SDS($ G $)}\label{2sdsalgo}
	\begin{algorithmic}[1]
		\Require{{A simple and undirected graph $G$ }}
		\Ensure{\mbox{A $ 2 $-SDS $ D $ of $G$.}}\\
		$ D_2  \gets$ \textbf{APPROX-2-DOM-SET} ($ G $) \\
		Let $ G^\prime= G[V\setminus D_2] $ \\
		$ D^\prime \gets$ \textbf{APPROX-DOM-SET} ($G^\prime$)  \\
		$ D \gets D_2 \cup D^\prime $ \\
		\Return $ D. $
		
		%	}
	\end{algorithmic}
\end{algorithm}
%\begin{figure}[H]
%	\includegraphics[width=0.95\textwidth]{alg}
%\end{figure}
\begin{theorem}\label{appscdom}
	The M2SDS problem in a graph $ G $ with maximum degree $ \Delta(G)$ can be approximated with an approximation ratio of $\Delta(G)+1.$ 
\end{theorem}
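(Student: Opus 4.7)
My plan is to derive the $\Delta(G)+1$ approximation ratio via two elementary inequalities, without actually invoking the (stronger) approximation guarantees of APPROX-2-DOM-SET or APPROX-DOM-SET at all. The basic idea is that a trivial upper bound on the algorithm's output, combined with the well-known lower bound $\gamma(G) \ge n/(\Delta(G)+1)$, already suffices.

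First I would observe that $D' \subseteq V \setminus D_2$, so $D_2$ and $D'$ are disjoint, giving $|D| = |D_2| + |D'| \le n$. Second, every $2$-SDS is by definition a dominating set, and any dominating set in $G$ has size at least $n/(\Delta(G)+1)$, since each chosen vertex dominates at most $\Delta(G)+1$ vertices (itself and its neighbors). Thus $\gamma_{2s}(G) \ge \gamma(G) \ge n/(\Delta(G)+1)$, and combining the two bounds yields
\[
\frac{|D|}{\gamma_{2s}(G)} \;\le\; \frac{n}{n/(\Delta(G)+1)} \;=\; \Delta(G)+1,
\]
as required.

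The one step that needs genuine care, which I would justify as a preliminary observation, is the already-asserted fact that $D = D_2 \cup D'$ is a $2$-SDS. For a pair $u_1,u_2$ with $u_1 \in D$, I would take $v_1 = u_1$; for $u_i \notin D$, I would choose $v_i$ inside $D_2 \cap N(u_i)$, which has size at least $2$ because $D_2$ is $2$-dominating, so distinct choices $v_1 \ne v_2$ are always available. The nontrivial point is that $(D \setminus \{v_1,v_2\}) \cup \{u_1,u_2\}$ still dominates every $w \notin D$; this is where $D'$ earns its keep, because such a $w$ lies in $V(G')$ and therefore has some neighbor $d' \in D'$, and since $d' \notin D_2$ it is automatically different from $v_1,v_2 \in D_2$ and remains in the swapped set. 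The "bad" case in which the swap removes all dominators of some $w$ is thus ruled out, and the remaining sub-cases ($w \in \{v_1,v_2\}$ dominated by $u_1,u_2$; $w$ in the set itself) are immediate. I do not expect any serious obstacle beyond this bookkeeping.
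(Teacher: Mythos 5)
Your proposal is correct and follows essentially the same route as the paper: the paper's proof is exactly the chain $\vert D\vert = \vert D_2\vert + \vert D'\vert \le n \le (\Delta(G)+1)\gamma(G) \le (\Delta(G)+1)\gamma_{2s}(G)$, with the stronger logarithmic guarantees of the two subroutines never actually used. The only difference is that you spell out the verification that $D = D_2 \cup D'$ is a $2$-SDS (choosing $v_i = u_i$ when $u_i \in D$ and $v_i \in D_2 \cap N(u_i)$ otherwise, with the $D'$-dominator of any $w \notin D$ surviving the swap), which the paper simply asserts as ``easily observed'' before stating the algorithm.
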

\begin{proof}
	To prove the theorem, we show that $ 2 $-SDS produced by our algorithm APPROX-2SDS, $ D$, is of size at most  $(\Delta(G)+1) $ times of $ \gamma_{2s}(G)$, i.e., $$ \vert D\vert  \le (\Delta(G)+1)\gamma_{2s}(G) $$  
	%	The algorithm APPROX-CDS produces a CDS, $ D_c,$ of $ G $ with an approximation ratio $ 3+\ln \Delta(G).$ i.e., $$\vert D_c \vert \le (3+\ln \Delta(G)) \gamma_c(G) $$
	%	Similarly, the algorithm APPROX-DOM-SET produces a dominating set, $ D,$ of $ G^\prime $ with an approximation ratio $ 1+\ln (\Delta(G)+1).$ i.e., 
	%\begin{align*}
	%\vert D \vert  \le (1+\ln (\Delta(G)+1)) \gamma(G^\prime) 
	%\end{align*}	
	From the algorithm,
	\begin{center}
		$\vert D \vert = \vert D_2 \cup D^\prime \vert$ \\
		\hspace*{1.76cm}	$ = \vert D_2 \vert  + \vert D^\prime \vert \le n $\\
		\hspace*{1.95cm}  $ \le (\Delta(G)+1) \gamma(G) $  \\
		\hspace*{2.12cm}  $ \le (\Delta(G)+1) \gamma_{2s}(G)  $	
	\end{center}
\end{proof}
Since the M2SDS problem in a graph with maximum degree $ \Delta(G) $ admits an approximation algorithm that achieves the approximation ratio of $\Delta(G)+1$, we immediately have the following corollary of Theorem \ref{appscdom}.
\begin{cor}
	The M2SDS problem is in the class of APX when the maximum degree $ \Delta(G) $ is fixed.
\end{cor}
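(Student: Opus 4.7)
The plan is to derive the corollary directly from Theorem \ref{appscdom} by unpacking what membership in APX requires. First I would recall that a minimization problem lies in APX exactly when (i) it is in NPO, i.e.\ feasibility of a candidate solution can be verified in polynomial time and the objective function is polynomially computable, and (ii) there is a polynomial-time algorithm achieving a constant approximation ratio. So the argument reduces to checking these two items for M2SDS under the hypothesis that $\Delta(G)$ is a fixed constant.

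For (i), I would point out that given a graph $G$ and a subset $S\subseteq V(G)$, testing whether $S$ is a $2$-SDS is polynomial: one iterates over the $O(n^{2})$ pairs $(u_1,u_2)$, and for each pair scans the $O(n^{2})$ candidate pairs $(v_1,v_2)\in S\times S$ with $v_1\in N[u_1]$, $v_2\in N[u_2]$, verifying in linear time whether $(S\setminus\{v_1,v_2\})\cup\{u_1,u_2\}$ dominates $G$. Hence M2SDS is in NPO.

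For (ii), I would invoke Theorem \ref{appscdom}: the algorithm \textbf{APPROX-2SDS} runs in polynomial time (it is the union of calls to \textbf{APPROX-2-DOM-SET} and \textbf{APPROX-DOM-SET}, both polynomial by Theorems \ref{app2dom} and \ref{appdom}) and outputs a $2$-SDS $D$ with $|D|\le (\Delta(G)+1)\,\gamma_{2s}(G)$. Under the assumption that $\Delta(G)$ is fixed, the factor $\Delta(G)+1$ is a constant independent of the input size, so \textbf{APPROX-2SDS} is a constant-factor approximation algorithm for M2SDS.

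Combining (i) and (ii), M2SDS satisfies the definition of APX, which is the claim. I do not expect any genuine obstacle here; the corollary is essentially a bookkeeping consequence of Theorem \ref{appscdom}, and the only thing worth being careful about is making the NPO membership and the constancy of $\Delta(G)+1$ explicit.
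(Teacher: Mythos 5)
Your argument is correct and matches the paper's reasoning: the corollary is presented there as an immediate consequence of Theorem \ref{appscdom}, since for fixed $\Delta(G)$ the ratio $\Delta(G)+1$ is a constant. Your additional verification of NPO membership is sound bookkeeping but not a departure from the paper's approach.
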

\subsection{Lower bound on approximation ratio}
To obtain a lower bound, we provide an approximation preserving reduction from the MINIMUM DOMINATION problem, which has the following lower bound.
\begin{theorem}\label{dsinapp}\cite{inapprox}
	For a graph $ G(V,E) $, the MINIMUM DOMINATION problem cannot be approximated within $ (1-\epsilon) \ln n$ for any $ \epsilon >0 $ unless NP $ \subseteq $ DTIME$( n ^{O(\log\log n)})$, where $ n=\vert V \vert$.  
\end{theorem}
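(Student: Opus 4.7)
The plan is to derive this inapproximability result by an approximation-preserving reduction from MINIMUM SET COVER, for which Feige's celebrated theorem already provides the matching lower bound of $(1-\epsilon)\ln N$ on the universe size $N$, under the same complexity-theoretic hypothesis $NP \not\subseteq DTIME(N^{O(\log\log N)})$. I would first invoke Feige's result as a black box, noting that the hard instances can be taken with the number of sets $m$ polynomially bounded in $N$.

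For the reduction, given an instance $(U,\mathcal{S})$ of MINIMUM SET COVER with $U=\{u_1,\ldots,u_N\}$ and $\mathcal{S}=\{S_1,\ldots,S_m\}$, I would construct a graph $G$ on vertex set $U \cup \mathcal{S}$ with an edge $(u_i,S_j)$ whenever $u_i \in S_j$ and with $\mathcal{S}$ turned into a clique. The clique guarantees that every set-vertex is dominated as soon as any set-vertex appears in a dominating set $D$; moreover, any element-vertex $u_i$ appearing in $D$ may be swapped for an arbitrary $S_j$ containing it without increasing $|D|$ or losing any dominated vertex. Consequently, without loss of generality $D \subseteq \mathcal{S}$, and such a dominating set corresponds exactly to a set cover of the same size, yielding a size-preserving bijection between minimum dominating sets of $G$ (in this canonical form) and minimum set covers of $(U,\mathcal{S})$.

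To complete the argument, I would transfer the approximation ratio. Writing $n = |V(G)| = N+m$ and using $m = \mathrm{poly}(N)$ from Feige's construction, one gets $\ln n = (1+o(1))\ln N$. Thus any hypothetical $(1-\epsilon)\ln n$-approximation for MINIMUM DOMINATION would produce a $(1-\epsilon')\ln N$-approximation for MINIMUM SET COVER with $\epsilon' > 0$ for $N$ sufficiently large, contradicting Feige's theorem unless $NP \subseteq DTIME(n^{O(\log\log n)})$.

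The only non-trivial step I anticipate is the book-keeping around the parameter shift from $N$ to $n$: one must verify carefully that the multiplicative $(1-\epsilon)$ slack is preserved under this change, which rests on the polynomial bound on $m$ in Feige's hard instances. The structural part of the reduction, and the bijection with set covers, are by now classical and essentially mechanical.
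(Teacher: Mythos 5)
This statement is quoted from the literature (the paper attributes it to Chleb\'{i}k and Chleb\'{i}kov\'{a} \cite{inapprox}) and the paper itself supplies no proof, so there is nothing internal to compare your argument against. Your proposal is the standard and correct derivation: reduce from MINIMUM SET COVER via the incidence graph with the set side completed to a clique, use the swap argument to normalize a dominating set into one contained in the set side (so dominating sets of $G$ correspond to set covers of the same size), and transfer Feige's $(1-\epsilon)\ln N$ lower bound using $\ln n = (1+o(1))\ln N$, which is legitimate because $m$ is polynomially bounded in $N$ in the hard instances. The only points worth stating explicitly in a written-up version are that every element may be assumed to lie in some set (so no element vertex is isolated) and that the swap of an element vertex $u_i$ for a set $S_j \ni u_i$ loses no dominated vertex because $N[u_i] \subseteq N[S_j]$ in the constructed graph; both are routine. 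The argument is sound and establishes the cited theorem.
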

%\noindent The above result holds in bipartite and split graphs as well \cite{inapprox}.
\begin{theorem}\label{scdsinappthm}
	For a graph $ G(V,E) $, the M2SDS problem cannot be approximated within $ (1-\epsilon) \ln \vert V \vert $ for any $ \epsilon >0 $ unless NP $ \subseteq $ DTIME$( \vert V \vert ^{O(\log\log \vert V \vert)}). $ 
\end{theorem}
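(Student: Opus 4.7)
The plan is to construct an approximation-preserving reduction from MINIMUM DOMINATION to M2SDS and then invoke Theorem \ref{dsinapp}. Given an instance $G(V,E)$ with $|V| = n$ of MINIMUM DOMINATION, I would build $G^*$ by attaching a constant-size gadget to $G$; the natural choice is the same gadget used in the W[2]-hardness reduction above, which introduces two vertices $y_1, y_2$ that are forced into every $2$-SDS of $G^*$ and collectively dominate $V$. The construction runs in polynomial time and satisfies $|V(G^*)| = n + O(1)$.

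The central claim is $\gamma_{2s}(G^*) = \gamma(G) + 2$. For the upper bound, I would verify that $D \cup \{y_1, y_2\}$ is a $2$-SDS of $G^*$ for every dominating set $D$ of $G$: the domination condition is clear, and since $y_1, y_2$ cover every vertex of $V$, defender selection and the post-move domination check amount to a short case analysis over whether the attack pair lies inside $V$, inside the gadget, or straddles the two. For the lower bound, I would use an attack pair placed inside the gadget to show that $y_1$ and $y_2$ must both belong to any $2$-SDS, after which the remainder must dominate $V$ in $G$, giving at least $\gamma(G) + 2$ vertices.

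Granting the identity, the inapproximability transfer goes through because both the instance size and the optimum differ only by additive constants. If an algorithm $\mathcal{A}$ approximated M2SDS within factor $(1-\epsilon)\ln|V(G^*)|$, then running $\mathcal{A}$ on $G^*$ and deleting $y_1, y_2$ from its output would yield a dominating set of $G$ of cardinality at most
\[
(1-\epsilon)\ln(n + O(1)) \cdot (\gamma(G) + 2) - 2.
\]
Since $\ln(n+O(1)) = \ln n + o(1)$ and $\gamma(G) + 2 \le (1+o(1))\gamma(G)$ whenever $\gamma(G) \to \infty$, for any $\epsilon' < \epsilon$ and all sufficiently large $n$ this bound is at most $(1-\epsilon')\ln n \cdot \gamma(G)$, contradicting Theorem \ref{dsinapp}; instances with $\gamma(G) = O(1)$ can be solved in polynomial time by brute force. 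Hence M2SDS admits no $(1-\epsilon)\ln|V|$ approximation unless NP $\subseteq$ DTIME$(|V|^{O(\log\log|V|)})$.

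The main obstacle I expect is the forward direction of the gadget analysis, namely checking that $D \cup \{y_1, y_2\}$ can always furnish two distinct defenders for every attack pair in such a way that the resulting swapped set still dominates $G^*$. The delicate case is when both attacks fall inside $V$ and would naively consume both $y_1$ and $y_2$ as defenders, threatening the domination of the pendant parts of the gadget; the gadget must be engineered so that an alternative defender pair inside $D$ is always available, which dictates its precise form. Once this is pinned down, the ratio bookkeeping that absorbs additive constants into $\epsilon$ is entirely standard.
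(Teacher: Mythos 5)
Your overall strategy coincides with the paper's: attach a constant-size gadget to $G$ so that the $2$-secure domination number of the new graph exceeds $\gamma(G)$ by an additive constant, then transfer the $(1-\epsilon)\ln n$ lower bound of Theorem \ref{dsinapp} by absorbing the additive constants into $\epsilon$. The paper does exactly this with a five-vertex gadget $\{w_1,w_2,z_1,z_2,z_3\}$ in which $w_1,w_2$ are joined to all of $V$, yielding $\vert S^*\vert \le \vert D^*\vert+3$, together with a brute-force branch for instances with a dominating set of size at most a fixed $k$ — the same device you invoke for $\gamma(G)=O(1)$. The ratio bookkeeping at the end of your argument is the standard one and matches the paper's.

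The step that fails as stated is the extraction: ``running $\mathcal{A}$ on $G^*$ and deleting $y_1,y_2$ from its output yields a dominating set of $G$.'' Precisely because your gadget vertices dominate all of $V$, an arbitrary (non-optimal) $2$-SDS $S$ of $G^*$ need not have $S\cap V$ dominating $G$: a vertex $u\in V$ may be covered only by the gadget. What one can actually prove is weaker — attacking such an undominated $u$ simultaneously with a pendant gadget vertex consumes both gadget defenders, and the post-swap domination requirement then forces every vertex of $V$ undominated by $S\cap V$ to lie inside a single closed neighborhood — so $S\cap V$ can be \emph{repaired} into a dominating set of $G$ by adding $O(1)$ further vertices. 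This repair is exactly what the paper's DOM-SET-APPROX algorithm does with its two conditional insertions of $v_u$ and $v_t$, and it is the real content of the reduction. Relatedly, your exact identity $\gamma_{2s}(G^*)=\gamma(G)+2$ is both more than you need and potentially off by an additive constant for the same reason (the lower-bound direction only forces the undominated part of $V$ into one closed neighborhood, not to be empty); the paper sidesteps this by proving and using only the one-sided inequality $\vert S^*\vert\le\vert D^*\vert+3$ plus the repairable extraction. None of this harms the asymptotics, but the extraction/repair argument must be supplied for the proof to be complete.
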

\begin{proof}
	In order to prove the theorem, we propose the following approximation preserving reduction. Let $ G(V,E) $, where $ V=\{v_1,v_2,\ldots,v_n\} $ be an instance of the MINIMUM DOMINATION problem. From this we construct an instance $ G^\prime(V^\prime, E^\prime) $ of M2SDS, where $ V^\prime=V \cup \{w_1, w_2, z_1, z_2, z_3\} $, and $ E^\prime=E \cup \{(v_i,w_1),$ $(v_i,w_2): v_i \in V\}  \cup \{(w_1,z_1) (w_2,z_2), (z_2, z_3)\}$.  \par
	Let $ D^*$ be a minimum dominating set of a graph $ G $ and $ S^*$ be a minimum $ 2 $-SDS of a graph $ G^\prime.$ It can be observed from the reduction that by using any dominating set of $ G, $ a $ 2 $-SDS of $ G^\prime $ can be formed by adding $ w_1,w_2$ and  $z_2$ vertices to it. Hence $ \vert S^* \vert \le \vert D^* \vert + 3. $ \par
	Let algorithm $ A $ be a polynomial time approximation algorithm to solve the M2SDS problem on graph $ G^\prime $ with an approximation ratio $ \alpha=(1-\epsilon) \ln \vert V^\prime \vert $ for some fixed $ \epsilon>0. $ Let $ k $ be a fixed positive integer. Next, we propose the following algorithm, DOM-SET-APPROX to find a dominating set of a given graph $ G $.
	
	\renewcommand{\algorithmicrequire}{\textbf{Input:}}
	\renewcommand{\algorithmicensure}{\textbf{Output:}}
	\begin{algorithm}
		
		\caption{DOM-SET-APPROX($ G $)}
		\begin{algorithmic}[1]
			\Require{\mbox{A simple and undirected graph $G$ }}
			\Ensure{\mbox{A dominating set $ D $ of $G$.}}
			%			\Indent
			\If {there exists a dominating set $ D^\prime $ of size at most $ k $} \\
			{\hskip1.5em $ D  \gets D^\prime$} 
			\Else \\
			{\hskip1.5emConstruct the graph $ G^\prime $ \\
				\hskip1.5emCompute a $ 2 $-SDS $ S $ of $ G^\prime $ by using algorithm $ A $ \\
				\hskip1.5em$ D\gets S \cap V $
				\hskip1.5em \If {$w_2 \in S$ and $ \exists v_u, v_u \in V\setminus N[D] $ }\\
				{\hskip3em  $D \gets D \cup \{v_u\}$}
				\EndIf
				%				 \hskip1.5em  $D \gets D_1 \cup \{v_t\}$ where $v_t$ is a vertex in $V\setminus N[D_1]$
				\hskip1.5em \If {$w_1,z_1 \in S$ and $ \exists v_t, v_t \in V\setminus N[D] $ }\\
				{\hskip3em  $D \gets D \cup \{v_t\}$}
				\EndIf	}				  
			\EndIf \\
			\Return $ D. $
			%	}
		\end{algorithmic}
	\end{algorithm}
	The algorithm DOM-SET-APPROX runs in polynomial time. It can be noted that if $ D $ is a minimum dominating set of size at most $ k $, then it is optimal. Next, we analyze the case where $ D $ is not a minimum dominating set of size at most $ k.$ \par
	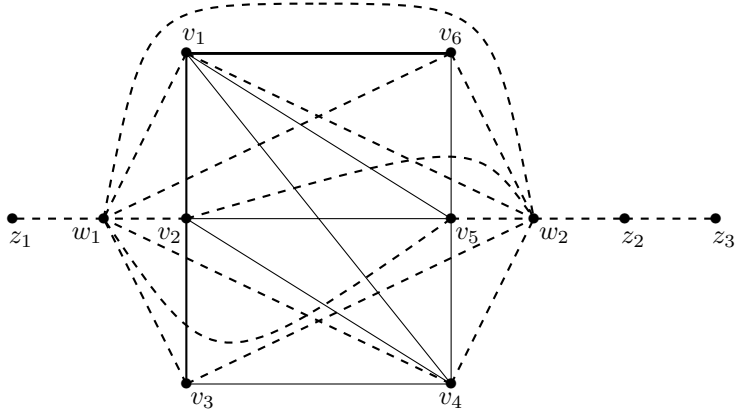
\begin{figure}[H]
		\begin{center}
			\begin{tikzpicture}[scale=1.1]
			%			\node at (2,5){$ G $};
			%			\draw[dotted] (1.6,2) ellipse (2.3cm and 3.3cm);
			\node at (0,0){\textbullet}; \node at (0.2,-0.2){$ v_3 $};
			%				\node at (0,1){\textbullet}; \node at (0,0.8){$ x_4 $};
			\node at (0,2){\textbullet}; \node at (-0.2,1.8){$ v_2 $};
			%				\node at (0,3){\textbullet}; \node at (0,2.7){$ v_2 $};
			\node at (0,4){\textbullet}; \node at (0.1,4.2){$ v_1 $};
			
			\node at (3.2,0){\textbullet}; \node at (3.2,-0.2){$ v_4 $};
			%				\node at (3.2,1){\textbullet}; \node at (3.2,0.8){$ y_4 $};
			\node at (3.2,2){\textbullet}; \node at (3.4,1.8){$ v_5 $};
			%				\node at (3.2,3){\textbullet}; \node at (3.2,2.8){$ y_2 $};
			\node at (3.2,4){\textbullet}; \node at (3.2,4.2){$ v_6 $};
			
			\draw[thick] (0,4)--(3.2,4)--(0,4);
			\draw[thick] (0,4)--(3.2,4)--(0,4);
			\draw[thick] (0,0)--(0,2)--(0,4);
			\draw (3.2,0)--(3.2,2)--(3.2,4);
			\draw (0,2)--(3.2,0);
			\draw (0,2)--(3.2,2)--(0,4)--(3.2,0);
			\draw (0,0)--(3.2,0);
			
			\node at (-1,2){\textbullet};\node at (-1.2,1.8){$ w_1 $};
			\node at (4.2,2){\textbullet};\node at (4.45,1.8){$ w_2 $};
			
			\draw[dashed,thick] (0,0)--(-1,2);
			\draw[dashed,thick] (3.2,0)--(-1,2);
			\draw[dashed,thick] (0,2)--(-1,2);
			\draw[dashed,thick] (3.2,2)..controls (0.35,0)..(-1,2);
			\draw[dashed,thick] (0,4)--(-1,2);
			\draw[dashed,thick] (3.2,4)--(-1,2);
			
			\draw[dashed,thick] (3.2,0)--(4.2,2);												
			\draw[dashed,thick] (0,0)--(4.2,2);												
			\draw[dashed,thick] (3.2,2)--(4.2,2);												
			\draw[dashed,thick] (0,2)..controls (3.5,3)..(4.2,2);												
			\draw[dashed,thick] (0,4)--(4.2,2);												
			\draw[dashed,thick] (3.2,4)--(4.2,2);

			\node at (-2.1,2){\textbullet};\node at (-2,1.75){$ z_1 $};
			\draw[dashed,thick] (-1,2)--(-2.1,2);
			
			\node at (5.3,2){\textbullet};\node at (5.4,1.75){$ z_2 $};
			\draw[dashed,thick]	 (4.2,2)--(5.3,2);
			
			\node at (6.4,2){\textbullet};\node at (6.5,1.75){$ z_3 $};
			\draw[dashed,thick]	 (5.3,2)--(6.4,2);
			
			\draw[dashed,thick] (-1,2)..controls (-0.5,4.6)..(1.6,4.6);
			\draw[dashed,thick]   (1.6,4.6)..controls (3.7,4.6)..(4.2,2);
			\end{tikzpicture}
			\caption{Example construction of a graph $G^\prime$}			
		\end{center}
	\end{figure}
	Let $ S^*$ be a minimum $ 2 $-SDS of $ G^\prime$, then $ \vert S^* \vert \ge k. $ Given a graph $ G $, DOM-SET-APPROX computes a dominating set of size $ \vert D \vert \le  \vert S \vert \le \alpha \vert S^* \vert \le \alpha (\vert D^* \vert + 3) = \alpha (1 + 3/\vert D^* \vert)\vert D^* \vert \le \alpha (1 + 3/k)\vert D^* \vert$. Therefore, DOM-SET-APPROX approximates a dominating set within a ratio $ \alpha (1 + 3/k).$ 
	If $ 3/k < \epsilon/2,$ then the approximation ratio $ \alpha (1 + 3/k) < (1-\epsilon) (1+\epsilon/2) \ln n= (1-\epsilon^\prime)\ln n$, where $\epsilon^\prime=\epsilon/2+\epsilon^2/2.$ \par
	\noindent By Theorem \ref{dsinapp}, if the MINIMUM DOMINATION problem can be approximated within a ratio of $ (1-\epsilon^\prime)\ln n,$ then $ NP \subseteq DTIME(n^{O(\log  \log n)})$. Similarly, if the M2SDS problem can be approximated within a ratio of $ (1-\epsilon)\ln n,$ then $ NP \subseteq DTIME(n^{O(\log  \log n)})$. For large values of $ n $, $ \ln n \approxeq \ln(n+5) $, for a graph $ G^\prime(V^\prime,E^\prime),$ where $ \vert V^\prime \vert = \vert V \vert +5,$ M2SDS problem cannot be approximated within a ratio of $ (1-\epsilon)\ln \vert V^\prime \vert$ unless $ NP \subseteq DTIME(\vert V ^\prime \vert^{O(\log \log \vert V^\prime \vert)} ).$
\end{proof}
\subsection{APX-completeness}
In this subsection, we prove that the M2SDS problem is APX-complete for graphs with maximum degree $ 4 $. This can be proved using an L-reduction, which is defined as follows.
\begin{definition}\textbf{(L-reduction)}
	Given two NP optimization problems $ F $ and $ G $ and a polynomial time transformation $ f $
	from instances of $ F $ to instances of $ G $, one can say that $ f $ is an \textit{L-reduction} if there exists positive constants $ \alpha $ and $ \beta $ such that for every instance $ x $ of $ F $
	\begin{enumerate}
		\item $ opt_G(f(x)) \le \alpha . opt_F(x) $.
		\item for every feasible solution $ y $ of $ f(x) $ with objective value $ m_G(f(x), y) = c_2 $ in polynomial time one can find a solution $ y^\prime $ of $ x $ with $ m_F (x, y^\prime) = c_1 $ such that $\vert  opt_F (x) - c_1 \vert \le  \beta \vert opt_G(f(x)) - c_2\vert. $
	\end{enumerate}
	Here, $ opt_F(x)$ represents the size of an optimal solution for an instance $ x $ of an NP optimization problem $ F$.  
\end{definition}
\noindent		An optimization problem $\pi $ is APX-complete if:
\begin{enumerate}
	\item $ \pi \in$ APX, and
	\item $ \pi \in$ APX-hard, i.e., there exists an L-reduction from some known APX-complete problem to $ \pi $.
\end{enumerate}
By Theorem \ref{appscdom}, it is known that the M2SDS problem can be approximated within a constant factor for graphs with maximum degree $ 4 $. Thus, M2SDS problem is in APX for graphs with maximum degree $ 4 $. To show APX-hardness of M2SDS, we give an L-reduction from MINIMUM DOMINATING SET problem in graphs with maximum degree $ 3$ (DOM-$ 3 $) which has been proved as APX-complete \cite{dom3apx}.
\begin{theorem}
	The M2SDS problem is APX-complete for graphs with maximum degree $ 4.$
\end{theorem}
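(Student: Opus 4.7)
The plan has two components: establishing that the restriction lies in APX, and demonstrating APX-hardness via an L-reduction from DOM-$3$, which is APX-complete by \cite{dom3apx}. APX membership is immediate from Theorem \ref{appscdom}: the general $(\Delta(G)+1)$-approximation becomes a $5$-approximation when $\Delta(G)=4$.

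For APX-hardness, I would take an instance $G(V,E)$ of DOM-$3$ with $V=\{v_1,\ldots,v_n\}$ and form $G'(V',E')$ by attaching a small constant-size gadget $H_i$ to each $v_i$. The gadget (a plausible candidate is a short path joined to $v_i$ with one extra pendant, so that a simplicial vertex is created) must satisfy: (i) the edge joining $H_i$ to $v_i$ raises $v_i$'s degree by at most one, so $\Delta(G')\le 4$; (ii) each gadget contains one or two simplicial vertices whose private neighbours force a fixed set $F_i\subseteq H_i$ of constant size $c_0$ into every $2$-SDS of $G'$; and (iii) every attack pair $(u_1,u_2)$ with both endpoints in $H_i\cup\{v_i\}$ is defendable using only vertices of $F_i$. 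Morally, this ``distributes'' the role of the global helper vertices $w_1,w_2$ used in the proof of Theorem \ref{scdsinappthm} into many local copies, so that the maximum degree stays bounded by $4$.

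To verify the L-reduction, on the forward side I would lift any dominating set $D$ of $G$ to the $2$-SDS $D\cup\bigcup_i F_i$ of $G'$, giving $\mathrm{opt}_{2\text{-SDS}}(G')\le \gamma(G)+c_0 n$. Combined with the bound $n\le (\Delta(G)+1)\gamma(G)\le 4\gamma(G)$ valid for $\Delta(G)\le 3$, this yields $\mathrm{opt}_{2\text{-SDS}}(G')\le \alpha\cdot\gamma(G)$ with $\alpha=1+4c_0$, which is the first L-reduction condition. Conversely, from any $2$-SDS $S$ of $G'$ I would extract $D_S:=(S\cap V)\cup P_S$, where the ``patch'' $P_S$ contains at most one vertex per gadget used to cover any $v_i$ not already dominated by $S\cap V$; a short case analysis would then show $D_S$ dominates $G$ and that $|D_S|-\gamma(G)\le \beta(|S|-\mathrm{opt}_{2\text{-SDS}}(G'))$ for a constant $\beta$, giving the second L-reduction condition.

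The crux of the argument will be designing the gadget so that property (iii) holds uniformly over \emph{all} attack pairs --- pairs internal to one $H_i$, pairs straddling two gadgets, pairs mixing a gadget vertex with a vertex of $V$, and pairs within $V$ itself --- while simultaneously forcing $S\cap V$ (up to the small patch $P_S$) to be a dominating set of $G$. Verifying that the swap $(S\setminus\{x,y\})\cup\{u_1,u_2\}$ stays dominating in $G'$ without dragging non-forced gadget vertices into $S$ is where the degree bound $\Delta(G')\le 4$ becomes tight, and where the constants $\alpha,\beta$ must be balanced carefully to complete the L-reduction.
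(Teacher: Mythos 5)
Your overall strategy is the same as the paper's: APX membership from Theorem \ref{appscdom}, and APX-hardness via an L-reduction from DOM-$3$ obtained by hanging constant-size pendant gadgets on the DOM-$3$ instance so that $\Delta(G')\le 4$, lifting a dominating set by adding the forced gadget vertices, and extracting a dominating set from a $2$-SDS. But as written there is a genuine gap, and you point at it yourself: the gadget is never fixed, and properties (ii) and (iii) --- which are the entire content of the hardness argument --- are not verified. The paper's proof is exactly the resolution of this ``crux'': it attaches one path gadget $x_i^1x_i^2x_i^3$ per \emph{pair} of vertices $v_{2i-1},v_{2i}$ (with $x_i^1$ adjacent to both), and proves the key claim that $\lvert S^*\rvert=\lvert D^*\rvert+2\lceil n/2\rceil$ exactly: every $2$-SDS $S$ of $G'$ satisfies $\lvert S\cap\{x_i^1,x_i^2,x_i^3\}\rvert\ge 2$ for each $i$, and --- crucially --- $S\cap V$ is already a dominating set of $G$, because an undominated vertex $v_j$ together with its gadget neighbour forms an attack pair that $S$ cannot defend. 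Combining with $\gamma(G)\ge n/4$ (valid since $\Delta(G)\le 3$) gives $\alpha=6$ and $\beta=1$.

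The second half of your plan is where the gap bites. Your ``patch'' $P_S$ is problematic on two counts: a gadget vertex cannot dominate $v_i$ inside $G$ (it is not a vertex of $G$), so the patch would have to consist of vertices of $V$; and, more importantly, the second L-reduction inequality requires the patch size to be charged against the excess $\lvert S\rvert-\lvert S^*\rvert$, which does not follow from anything you state. If many $v_i$ were left undominated by $S\cap V$, $\lvert P_S\rvert$ could be large while $\lvert S\rvert$ is small, and $\lvert D_S\rvert-\gamma(G)\le\beta(\lvert S\rvert-\lvert S^*\rvert)$ would fail. The clean way out --- and what the paper actually proves --- is that no patch is needed at all: one must show that a $2$-SDS cannot leave any $v_j$ undominated by its trace on $V$, via the indefensible attack pair $(v_j,\,\text{gadget neighbour of }v_j)$. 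Until you fix a concrete gadget and carry out that verification (including checking defendability of all attack-pair types against your forced sets $F_i$, which is your property (iii)), the proposal is a plausible plan but not a proof.
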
 
\begin{proof}
	It is known that M2SDS is in APX. Given an instance $ G(V,E)$ of DOM-$3$, where $ V=\{v_1,v_2,\ldots,v_n\} $, we construct an instance $ G^\prime(V^\prime, E^\prime) $ of M2SDS where $ V^\prime= V \cup \{x_i^1,x_i^2,x_i^3 : 1 \le i \le \lceil \frac{n}{2} \rceil\}$ and $ E^\prime=E \cup \{(v_i,x_{(i+1)/2}^1),$ $(v_{i+1},x_{(i+1)/2}^1) : 1 \le i \le n-1$ \& $i=1\pmod 2\} \cup \{(v_n,x_{(n+1)/2}^1) : n= 1 \pmod 2\} \cup \{(x_i^1,x_i^2),(x_i^2,x_i^3) : 1 \le i \le \lceil \frac{n}{2} \rceil\}.$ Note that $ G^\prime $ is a graph with maximum degree $ 4 $. An example construction of a graph $ G^\prime$ from a graph $ G$ is shown in Figure \ref{fig:APX}.  
	\begin{figure}
		\begin{center}
			\begin{tikzpicture}[scale=1.0]
			\draw[dotted] (1.25,2.9) ellipse (1.9cm and 2cm);
			\node at (1.2,4.5){$ G $};
			
			\node at (0,2){\textbullet}; \node at (0,1.7){$ v_2 $};			%\node at (-1.5,2){\textbullet}; \node at (-1.6,1.7){$ x_2 $};
			%			\node at (-3,2){\textbullet}; \node at (-2.9,1.7){$ y_2 $};
			
			\node at (0,4){\textbullet}; \node at (-0.18,3.65){$ v_1 $};		
			\node at (-1.5,3){\textbullet}; \node at (-1.7,2.7){$ x_1^1 $};
			\node at (-3,3){\textbullet}; \node at (-3,2.7){$ x_1^2 $};
			\node at (-4.5,3){\textbullet}; \node at (-4.5,2.7){$ x_1^3 $};

			%			\node at (4,2){\textbullet}; \node at (4,1.7){$ x_3 $};
			%			\node at (5.5,2){\textbullet}; \node at (5.5,1.7){$ y_3 $};
			\node at (2.5,2){\textbullet}; \node at (2.5,1.7){$ v_3 $};
			
			\node at (4,3){\textbullet}; \node at (4.1,2.7){$ x_2^1 $};
			\node at (5.5,3){\textbullet}; \node at (5.5,2.7){$ x_2^2 $};
			\node at (2.5,4){\textbullet}; \node at (2.69,3.7){$ v_4 $};
			\node at (7,3){\textbullet}; \node at (7,2.7){$ x_2^3 $};
			
			\draw (2.5,4)--(2.5,2)--(0,2);
			\draw (2.5,4)--(0,4)--(0,2);
			\draw (0,2)--(2.5,4);

			\draw (0,4)--(-1.5,3)--(-3,3)--(-4.5,3);
			\draw (0,2)--(-1.5,3);
			\draw (2.5,2)--(4,3)--(5.5,3)--(7,3);
			\draw (2.5,4)--(4,3);%--(5.5,4);
			
			%			\draw (-1.5,4)--(-1.5,2);
			%			\draw (4,2)--(4,4);
			%			\draw (-1.5,2)..controls (1.2,0.3)..(4,2);
			\end{tikzpicture}
			\caption{Construction of $ G^\prime$ from $ G $}
			\label{fig:APX}
		\end{center}
	\end{figure}
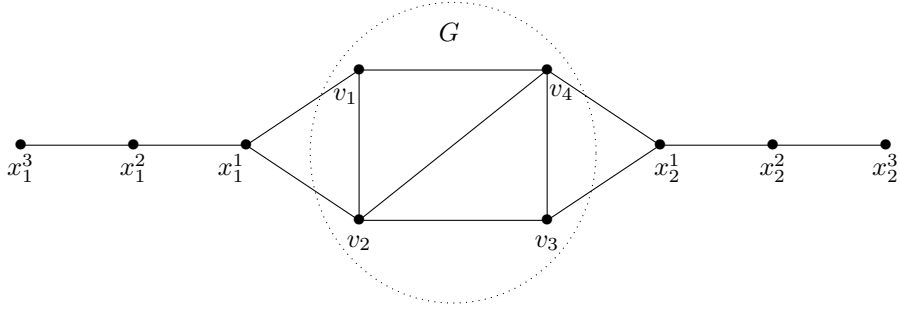
	\begin{claim}\label{claim1}
		If $ D^*$ is a minimum dominating set of $ G$ and $ S^*$ is a minimum $ 2 $-SDS of $ G^\prime$ then $ \vert S^* \vert = \vert D^* \vert + 2\lceil \frac{n}{2} \rceil,$ where $ n=\vert V \vert.$
	\end{claim}
	\begin{proof}[Proof of claim]
		Suppose $ D^*$ is a minimum dominating set of $ G$, then $ D^* \cup \{x_i^1,x_i^2 : 1 \le i \le \lceil \frac{n}{2} \rceil \} $ is a $ 2 $-SDS of $ G^\prime.$ Further, if $ S^*$ is a minimum $ 2 $-SDS of $ G^\prime $, then it is clear that $ \vert S^* \vert \le \vert D ^* \vert + 2 \lceil \frac{n}{2} \rceil.$ \par
		Next, we show that $ \vert S^* \vert \ge \vert D ^* \vert + 2 \lceil \frac{n}{2} \rceil.$ Let $ S $ be any $ 2 $-SDS of $ G^\prime$. It is clear that for any $i,$ where $1 \le i \le \lceil \frac{n}{2} \rceil, $ $\vert S \cap \{x_i^1,x_i^2,x_i^3\}\vert \ge 2$. Let $ D=S \cap V$ is not a dominating set of $ G.$ Then there exists a vertex $ v_j$ which is not dominated by $ D $ and consequently, two attacks simultaneously happen at vertices $ v_j, v_k^1 $ where $ v_k^1 \in N(v_j) \setminus V$ cannot be defended by $ S,$ which is a contradiction. Therefore, for every vertex $ u \in V \setminus D$ there exists a vertex $ v \in D $ such that $ (u,v) \in E $. Hence $D$ is a dominating set of $ G$ and $\vert D \vert \ge \vert D^* \vert$, which implies $ \vert S \vert \ge \vert D^* \vert+ 2 \lceil \frac{n}{2} \rceil.$ Since  $\vert S \vert \ge \vert S^* \vert$,  it is clear that $ \vert S^* \vert \ge \vert D ^* \vert + 2 \lceil \frac{n}{2} \rceil.$ 
	\end{proof}
	Let $ D^*$ and $ S^* $ be a minimum dominating set and a minimum $ 2 $-SDS of $ G $ and $ G^\prime$ respectively. It is known that for any graph $H$ with maximum degree $ \Delta(H) $, $ \gamma(H) \ge \frac{n}{\Delta(H)+1}$, where $ n=\vert V(H) \vert.$ Thus, $ \vert D^* \vert \ge \frac{n}{4}.$ From Claim \ref{claim1} it is evident that, $ \vert S^* \vert = \vert D ^* \vert + 2 \lceil \frac{n}{2} \rceil \le \vert D ^* \vert + n+1 \le \vert D ^* \vert+4\vert D ^* \vert+1 \le 6 \vert D ^* \vert.$ \par
	Now, consider a $ 2 $-SDS $ S $ of $ G^\prime$. Clearly, there exists a dominating set $ D$ in $ G$ of size at most $\vert S \vert - 2 \lceil \frac{n}{2} \rceil.$ Therefore, $ \vert D \vert  \le \vert S \vert - 2 \lceil \frac{n}{2} \rceil.$ Hence,  $ \vert D \vert-\vert D^* \vert  \le \vert S \vert - 2 \lceil \frac{n}{2} \rceil-\vert D^*\vert=\vert S \vert -\vert S^* \vert.$ This proves that there is an L-reduction with $ \alpha=6$ and $ \beta=1.$
\end{proof}
\section{Complexity difference in domination and $ 2 $-secure domination}
Although $ 2 $-secure domination is one of the several variants of domination problem, however they differ in computational complexity. In particular, there exist graph classes for which the first problem is polynomial-time solvable whereas the second problem is NP-complete and vice versa. Similar study has been performed between domination and other domination parameters in \cite{henning,jp}. \par

The DOMINATION problem is linear time solvable for doubly chordal graphs \cite{brandstadt}, but the $ 2$-SDM problem is NP-complete for this class of graphs which is proved in section \ref{complexity}. Now, we construct a class of graphs in which the $ 2$-SDM problem can be solved trivially, whereas the DOMINATION problem is NP-complete.
\begin{definition}\textbf{(GS graph)}
	A graph is \textit{GS graph} if it can be constructed from a connected graph $ G(V,E)$ where $ \vert V \vert=n,$ in the following way:\\[5pt]
	%	\begin{center}
	%	Let $ S_j $ be the star with vertex set $ T_j-\{c_j\} $ with $ q_j $ as its central vertex. Let $ G $ be the graph obtained from $ H $ and $ S_j $ by joining $ p_j $ to $ c_j $, where $ 1 \le j \le m$.
	1. Create $ n $ star graphs $ \{S_1,S_2,\ldots,S_n\} $ each with $ 4 $ vertices, such that $ b_i $ as the central vertex and $ a_i,c_i,d_i $ as leaves of $ S_i $.\\[3pt]
	2. Attach graph $ G $ and $ S_i $ by joining $ v_i $ to $ a_i $, where $ 1 \le i \le n$.
	%		\item Make each $ x_i $ is adjacent to $ i^{th} $ vertex of graph $ G$ and one vertex of $ i^{th} $ complete graph ($ K_3 $).
	%	  Add edges $ \{(x_i,v_i),(x_i,a_i): v_i \in V\}. $ 
	%	\end{enumerate}
\end{definition} 
\begin{theorem}
	If $ G^\prime$ is a GS graph obtained from a graph $ G(V,E) $ $ (\vert V \vert=n)$, then $ \gamma_{2s}(G^\prime)=3n$. 
\end{theorem}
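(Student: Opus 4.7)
The plan is to establish the matching bounds $\gamma_{2s}(G')\le 3n$ and $\gamma_{2s}(G')\ge 3n$. I would work with the partition $V(G')=\bigcup_{i=1}^{n}W_i$ where $W_i=\{v_i,a_i,b_i,c_i,d_i\}$ collects the attachment vertex and the four vertices of the $i$th star; for any $2$-SDS $S$ we then have $|S|=\sum_{i=1}^{n}|S\cap W_i|$, so each bound follows from a local statement about a single block.

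For the upper bound I would exhibit the explicit set $S=\bigcup_{i=1}^{n}\{v_i,b_i,c_i\}$, of size $3n$. Domination is immediate, since each $a_i$ is dominated by $v_i\in S$ (and by $b_i$) and each $d_i$ by $b_i$. To verify the $2$-secure property, by symmetry between stars one only has to treat a handful of attack patterns $(u_1,u_2)$: both attacks inside a single block or in two different blocks. Most cases are trivial because one can take as defenders the attacked vertices themselves when they lie in $S$, or the pair $(b_i,c_i)$. The genuinely delicate configuration is the intra-star attack $(a_i,v_i)$: the naive choice $(v_1,v_2)=(b_i,v_i)$ leaves $d_i$ undominated once $b_i$ is removed, so instead I take $v_1=v_i$ (defending $a_i$, since $v_i\in N[a_i]$) and $v_2=v_j$ where $v_j$ is a $G$-neighbor of $v_i$ (which exists by connectivity of $G$). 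After this swap $b_i$ is preserved, and $v_j$ itself is covered by $v_i$, which re-enters the set.

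For the lower bound I would fix an arbitrary $2$-SDS $S$ and prove $|S\cap W_i|\ge 3$ for every $i$. The starting observation is that for the attack pair $(c_i,d_i)$, the candidate defenders lie in $N[c_i]\cap S\subseteq\{b_i,c_i\}$ and $N[d_i]\cap S\subseteq\{b_i,d_i\}$ respectively and must be distinct, so at least two of $\{b_i,c_i,d_i\}$ are in $S$ and $|S\cap W_i|\ge 2$. To rule out equality I would assume for contradiction that $|S\cap W_i|=2$, so $S\cap W_i\subseteq\{b_i,c_i,d_i\}$ and in particular $a_i,v_i\notin S$. The subcase $S\cap W_i=\{c_i,d_i\}$ fails immediately because then $a_i$ has no neighbor in $S$. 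For $S\cap W_i=\{b_i,c_i\}$ (the case $\{b_i,d_i\}$ being symmetric) I turn to the attack $(v_i,a_i)$: since $N[a_i]\cap S=\{b_i\}$, the defender of $a_i$ is forced to be $b_i$, while the defender of $v_i$ must come from $N_G(v_i)\cap S$; but then the resulting set contains neither $b_i$ nor $d_i$, leaving $d_i$ undominated, a contradiction.

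The main obstacle will be the $|S\cap W_i|=2$ case of the lower bound: the single attack on $(c_i,d_i)$ yields only $|S\cap\{b_i,c_i,d_i\}|\ge 2$, and the jump to $\ge 3$ requires combining this with the second attack $(v_i,a_i)$ together with the forced defender at $a_i$. The upper-bound verification is mechanical, but the intra-star attack $(a_i,v_i)$ is the one point where connectivity of $G$ is actually invoked.
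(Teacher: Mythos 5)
Your proposal is correct and follows essentially the same route as the paper: the same explicit witness $\{v_i,b_i,c_i : 1\le i\le n\}$ for the upper bound, and the same per-block counting argument showing $|S\cap\{v_i,a_i,b_i,c_i,d_i\}|\ge 3$ for the lower bound. You merely supply the case analysis (the $(c_i,d_i)$ and $(v_i,a_i)$ attacks) that the paper leaves as "it is clear", which is a faithful elaboration rather than a different approach.
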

\begin{proof}
	Let $ G^\prime(V^\prime, E^\prime) $ be a GS graph. An example construction of GS graph is illustrated in Figure \ref{gsgraph}. Let $ S= \{v_i, b_i, c_i : 1\le i\le n\}.$ It can be observed that $ S $ is a $ 2 $-SDS of $ G^\prime$ of size $ 3n$ and hence $ \gamma_{2s}(G^\prime) \le 3n.$ \par
	Let $ S^* $ be any $ \gamma_{2s} $-set in $ G^\prime $. It is clear that $\vert S^* \cap \{b_i,c_i,d_i\}\vert \ge 2,$ for each $ 1 \le i \le n$. If for some $ i $, $S^* \cap \{v_i,a_i\} = \emptyset $, then $\vert S^* \cap \{b_i,c_i,d_i\}\vert =3.$ 
	Thus, $ \vert  S^* \cap \{v_i,a_i,b_i,c_i,d_i\}\vert \ge 3$, for $ 1 \le i \le n.$ Hence $  \gamma_{2s}(G^\prime) \ge 3n. $	
	%	Let $S_k^i = S^* \cap \{a_i,b_i,c_i,d_i\} $, $ 1 \le i \le n$.  Let $ S_k= \cup_{i=1}^n S_k^i.$ The set $ S_k $ cannot dominat
	%	It is known that every SCDS of a graph $ G $ is also a CDS of $ G $ and every CDS should contain all the cut-vertices of $ G $. Thus, it can be easily observed that for $ 1\le i \le n $, the vertices $ v_i, x_i$ and $a_i$ are cut-vertices in $ G^\prime$ and these vertices should be included in every SCDS of $ G^\prime $. Therefore, $ \vert S \vert \ge 3n. $ It can also be noted that these cannot defend any other vertex in $ G^\prime.$ Therefore, either $ b_i $ or $ c_i $, for each $ i $, where $ 1\le i\le n$ should be included in every SCDS of $ G^\prime$, and hence, $ \vert S \vert \ge 3n.$
\end{proof}
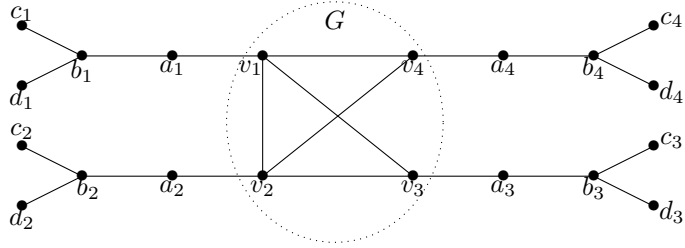
\begin{figure}[H]
	\begin{center}
		\begin{tikzpicture}[scale=0.8]
		\draw[dotted] (1.2,2.9) ellipse (1.8cm and 2cm);
		
		\node at (0,2){\textbullet}; \node at (0,1.8){$ v_2 $};			\node at (-1.5,2){\textbullet}; \node at (-1.5,1.8){$ a_2 $};
		\node at (-3,2){\textbullet}; \node at (-2.9,1.8){$ b_2 $};
		\node at (-4,2.5){\textbullet}; \node at (-4,2.75){$ c_2 $};
		\node at (-4,1.5){\textbullet}; \node at (-4,1.3){$ d_2 $};
		%				\node at (-5,2){\textbullet}; \node at (-4.9,1.8){$ d_2 $};
		
		\node at (0,4){\textbullet}; \node at (-0.2,3.8){$ v_1 $};			\node at (-1.5,4){\textbullet}; \node at (-1.5,3.8){$ a_1 $};
		\node at (-3,4){\textbullet}; \node at (-3,3.75){$ b_1 $};
		\node at (-4,4.5){\textbullet}; \node at (-4,4.7){$ c_1 $};
		\node at (-4,3.5){\textbullet}; \node at (-4,3.3){$ d_1 $};
		%		\node at (-5,4){\textbullet}; \node at (-5,3.8){$ d_1 $};
		
		\node at (4,2){\textbullet}; \node at (4,1.8){$ a_3 $};
		\node at (5.5,2){\textbullet}; \node at (5.5,1.8){$ b_3 $};
		\node at (6.5,2.5){\textbullet}; \node at (6.8,2.6){$ c_3 $};
		\node at (6.5,1.5){\textbullet}; \node at (6.8,1.4){$ d_3 $};
		\node at (2.5,2){\textbullet}; \node at (2.5,1.8){$ v_3 $};
		%		\node at (7.5,2){\textbullet}; \node at (7.5,1.8){$ d_3 $};
		
		\node at (4,4){\textbullet}; \node at (4,3.8){$ a_4 $};
		\node at (5.5,4){\textbullet}; \node at (5.5,3.8){$ b_4 $};
		\node at (6.5,3.5){\textbullet}; \node at (6.8,3.4){$ d_4 $};
		\node at (6.5,4.5){\textbullet}; \node at (6.8,4.6){$ c_4 $};
		\node at (2.5,4){\textbullet}; \node at (2.5,3.8){$ v_4 $};
		%		\node at (7.5,4){\textbullet}; \node at (7.5,3.75){$ d_4 $};
		
		\draw (0,4)--(2.5,2)--(0,2);
		\draw (2.5,4)--(0,4)--(0,2);
		\draw (0,2)--(2.5,4);

		\draw (0,4)--(-1.5,4)--(-3,4)--(-4,4.5);
		\draw (-3,4)--(-4,3.5);%--(-4,4.5);
		
		\draw (0,2)--(-1.5,2)--(-3,2)--(-4,2.5);%--
		\draw (-4,1.5)--(-3,2);
		\draw (2.5,2)--(4,2)--(5.5,2)--(6.5,2.5);
		\draw (6.5,1.5)--(5.5,2);
		\draw (2.5,4)--(4,4)--(5.5,4)--(6.5,4.5);
		\draw (6.5,3.5)--(5.5,4);	
		%		\draw (6.5,1.5)--(7.5,2);
		%		\draw (5.5,2)--(7.5,2);
		%		\draw (6.5,2.5)--(7.5,2);
		
		%		\draw (7.5,4)--(5.5,4);	
		%		\draw (7.5,4)--(6.5,3.5);	
		%		\draw (7.5,4)--(6.5,4.5);
		%		
		%		\draw (-5,4)--(-3,4);	
		%		\draw (-5,4)--(-4,3.5);	
		%		\draw (-5,4)--(-4,4.5);
		
		%		\draw (-5,2)--(-3,2);	
		%		\draw (-5,2)--(-4,1.5);	
		%		\draw (-5,2)--(-4,2.5);		
		%		\draw (2.5,2)--(2.5,1);
		%		\draw (0,0)--(4,0);
		
		%		\node at (5.5,2){\textbullet}; \node at (5.5,1.8){$ a_3 $};
		%		\node at (5.5,3){\textbullet}; \node at (5.5,2.8){$ a_2 $};
		%		\node at (5.5,4){\textbullet}; \node at (5.5,3.8){$ a_1 $};
		%		
		%		\node at (6.75,2){\textbullet}; \node at (6.75,1.8){$ b_3 $};
		%		\node at (6.75,3){\textbullet}; \node at (6.75,2.8){$ b_2 $};
		%		\node at (6.75,4){\textbullet}; \node at (6.75,3.8){$ b_1 $};
		%		
		%		\node at (8,2){\textbullet}; \node at (8,1.8){$ c_3 $};
		%		\node at (8,3){\textbullet}; \node at (8,2.8){$ c_2 $};
		%		\node at (8,4){\textbullet}; \node at (8,3.8){$ c_1 $};
		
		%		\draw (4,4)--(6,4)--(8,4);
		%		\draw (4,3)--(6,3)--(8,3);
		%		\draw (4,2)--(6,2)--(8,2);
		
		\node at (1.2,4.6){$ G $};
		%		\draw[dotted] (4.3,-0.4)--(4.3,4.6)--(-0.2,4.6)--(-0.2,-0.4)--(4.3,-0.4);		
		\end{tikzpicture}
		\caption{GS graph construction}
		\label{gsgraph}
	\end{center}
\end{figure}
\begin{lemma}\label{difflemma}
	Let $ G^\prime$ be a GS graph constructed from a graph $ G(V,E).$ Then $ G $ has a dominating set of size at most $ k $ if and only if $ G^\prime$ has a dominating set of size at most $ k+n.$
\end{lemma}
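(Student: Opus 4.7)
The plan is to prove both directions by giving explicit constructions that shuffle vertices between the ``original'' copy of $G$ inside $G'$ and the attached stars. The forward direction is the easier one: given a dominating set $D$ of $G$ with $|D|\le k$, I would let $D' = D \cup \{b_i : 1\le i\le n\}$. Every $v_i\in V$ is dominated in $G'$ by $D$ itself, while the central vertex $b_i$ dominates $a_i,c_i,d_i$ as well as itself. Hence $D'$ is a dominating set of $G'$ of size at most $k+n$.

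For the converse, the key structural observation is that in each star $S_i$, the leaves $c_i$ and $d_i$ have $b_i$ as their unique neighbor in $G'$. Thus for any dominating set $D'$ of $G'$ and any $i$, we have $|D'\cap\{b_i,c_i,d_i\}|\ge 1$ (in fact $\ge 2$ when $b_i\notin D'$, but $\ge 1$ is all I need). Summing over $i$ gives $\bigl|D'\cap\bigcup_{i=1}^n\{b_i,c_i,d_i\}\bigr|\ge n$, and therefore
\[
  \bigl|D'\cap\bigl(V\cup\{a_i : 1\le i\le n\}\bigr)\bigr| \;\le\; (k+n)-n \;=\; k.
\]

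From $D'$ I would then build a candidate dominating set of $G$ by ``projecting'' the $a_i$'s onto $V$: set
\[
  D \;=\; (D'\cap V)\;\cup\;\{v_i : a_i\in D'\}.
\]
By the displayed inequality $|D|\le k$. To verify $D$ dominates $G$, take any $v_j\in V\setminus D$; then in particular $v_j\notin D'$ and $a_j\notin D'$ (otherwise $v_j$ would have been added). Since $D'$ dominates $v_j$ in $G'$ and $N_{G'}(v_j)=N_G(v_j)\cup\{a_j\}$, some neighbor $v_\ell\in N_G(v_j)$ must lie in $D'\cap V\subseteq D$, so $v_j$ is dominated in $G$.

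The construction and both verifications are essentially mechanical; I do not anticipate a genuine obstacle. The only place where care is needed is the backward direction, where one must make sure that the accounting of at least one ``star vertex'' per $S_i$ is used correctly and that the replacement $a_i\mapsto v_i$ does not silently inflate the size of $D$ beyond $k$. Keeping the set-builder notation above avoids double counting and makes the size bound transparent.
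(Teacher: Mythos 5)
Your proof is correct and follows essentially the same route as the paper: add the centers $b_i$ for the forward direction, and project $a_i\mapsto v_i$ for the converse. If anything, your counting is slightly more careful than the paper's, since you correctly base the ``one star vertex per $S_i$'' accounting on $\{b_i,c_i,d_i\}$ (forced by the pendant leaves $c_i,d_i$), which is exactly what is needed to conclude $|D|\le k$.
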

\begin{proof}
	%	Let $ A $ contain the degree $ 4 $ vertex from each copy of $ K_4.$
	Suppose $ D $ is a dominating set of $ G $ of size at most $ k,$ then it is clear that $ D \cup \{b_i : 1 \le i \le n\} $ is a dominating set of $ G^\prime $ of size at most $ k+n.$ 	Conversely, suppose $ D^\prime $ is a dominating set of $ G^\prime$ of size $ k+n.$ Clearly $ \vert D^\prime \cap \{a_i,b_i,c_i,d_i\} \vert \ge 1,$ for each $ 1 \le i \le n$. 
	%	Let $ D_v=D^\prime \cap V $ and  $ D^{\prime\prime} = \{(D^\prime \setminus\{x_i\}) \cup \{v_i\} : x_i \in  D^\prime \}$. 
	Let $ D^{\prime\prime}$ be the set formed by replacing all $ a_i$'s in $ D^\prime $ with corresponding $ v_i $'s. Clearly, $ D^{\prime\prime} $ is a dominating set of $ G $ and $ \vert D^{\prime\prime} \vert \le k $.
\end{proof}
\noindent The following result is well known for the DOMINATION problem.
\begin{theorem}(\cite{garey})\label{dsnpc}
	The DOMINATION problem is NP-complete for general graphs.
\end{theorem}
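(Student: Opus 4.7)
The plan is to invoke the classical polynomial-time many-one reduction from 3-SAT to DOMINATION, as in Garey and Johnson; since the theorem is stated as a known result, no new idea is required, but I sketch the standard argument below.

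Membership in NP is immediate: given a candidate $D \subseteq V$ with $|D| \le k$, one verifies in time $O(|V|+|E|)$ that every vertex in $V \setminus D$ has at least one neighbor in $D$.

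For NP-hardness, given a 3-SAT instance with variables $x_1,\ldots,x_n$ and clauses $C_1,\ldots,C_m$, I would construct a graph $H$ as follows. For each variable $x_i$, introduce three new vertices $t_i, f_i, z_i$ forming a triangle. For each clause $C_j$, introduce a vertex $c_j$ joined to $t_i$ whenever the literal $x_i$ appears in $C_j$ and to $f_i$ whenever $\bar{x}_i$ appears in $C_j$. Set the target $k = n$. The triangles force any dominating set to spend at least one vertex per variable (to dominate $z_i$), while the clause adjacencies tie domination of each $c_j$ to satisfaction of $C_j$.

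The equivalence then runs in two directions. If the formula is satisfiable, picking $t_i$ when $x_i$ is true and $f_i$ otherwise yields a dominating set of size exactly $n$. Conversely, any dominating set of size at most $n$ must contain exactly one vertex from each of the $n$ vertex-disjoint triangles, and therefore no clause vertex $c_j$, so every $c_j$ is dominated via one of its literal neighbors; replacing any $z_i$ in the set by $t_i$ is harmless because $z_i$ has no clause neighbors, and the resulting literal choices decode to a satisfying assignment. The main subtlety is this backward direction — one must rule out dominating sets that trade triangle vertices for clause vertices, and handle the $z_i$ case by a swap — but both points are dispatched by a simple counting argument together with the observation that $z_i$ lies only in its own triangle.
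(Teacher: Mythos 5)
Your proposal is correct and complete. The paper itself offers no proof here -- the theorem is stated purely as a cited result from Garey and Johnson -- so there is nothing in the text to compare against; but your reduction is sound. The counting argument in the backward direction is the right one: each $z_i$ has closed neighborhood $\{z_i,t_i,f_i\}$, so any dominating set meets each of the $n$ disjoint triangles, and with budget $k=n$ it meets each exactly once and contains no clause vertex; the swap of $z_i$ for $t_i$ is harmless since $N[z_i]\subseteq N[t_i]$. One stylistic remark: the canonical proof in the cited reference proceeds by reducing from Vertex Cover (subdividing each edge with a new vertex adjacent to both endpoints), which is slightly shorter; your direct 3-SAT reduction is the other standard route and proves exactly the same statement, at the cost of having to argue the decoding of the assignment explicitly. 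Either is acceptable as a stand-in for the citation.
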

%\noindent By using above theorem and lemma \ref{difflemma}, we can easily prove the following result.
\begin{theorem}
	The DOMINATION problem is NP-complete for GS graphs.
\end{theorem}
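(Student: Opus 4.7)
The plan is to establish the two standard ingredients for NP-completeness: membership in NP and a polynomial-time reduction from a known NP-complete problem. Membership in NP is routine: given a subset $D \subseteq V(G')$ as a certificate, one can check in polynomial time whether $|D| \le k$ and whether every vertex of $G'$ lies in $N[D]$.

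For NP-hardness I would reduce from the DOMINATION problem on general graphs, which is NP-complete by Theorem \ref{dsnpc}. Given an instance $(G,k)$ of DOMINATION, where $G = (V,E)$ with $|V| = n$, I would construct the GS graph $G'$ from $G$ exactly as in the definition (creating $n$ stars $S_1, \ldots, S_n$ with vertices $\{a_i, b_i, c_i, d_i\}$ and attaching each $a_i$ to $v_i$), and output the instance $(G', k+n)$ of DOMINATION on GS graphs. Clearly $G'$ has $4n$ additional vertices and the edges described in the definition, so the construction runs in polynomial time, and by construction $G'$ is a GS graph.

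The correctness of the reduction is immediate from Lemma \ref{difflemma}: $G$ has a dominating set of size at most $k$ if and only if $G'$ has a dominating set of size at most $k+n$. Since the DOMINATION problem is NP-complete on general graphs (Theorem \ref{dsnpc}), this polynomial-time reduction establishes that the DOMINATION problem on GS graphs is NP-hard, and combined with membership in NP, it is NP-complete.

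There is essentially no hard step here because all the combinatorial content has been discharged into Lemma \ref{difflemma}. The only thing to be careful about is noting explicitly that the constructed $G'$ is indeed a GS graph (so that the reduction targets the correct problem) and that the transformation $(G,k) \mapsto (G', k+n)$ is computable in polynomial time, both of which are immediate from the definition.
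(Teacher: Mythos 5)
Your proposal is correct and follows exactly the paper's argument: the paper's proof is a one-line appeal to Theorem \ref{dsnpc} and Lemma \ref{difflemma}, and your write-up simply spells out the same reduction $(G,k)\mapsto(G',k+n)$ in more detail. Nothing is missing.
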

\begin{proof}
	The proof directly follows from Theorem \ref{dsnpc} and Lemma \ref{difflemma}.
\end{proof}
%\textbf{Closing Remark}\\[4pt]
\noindent It is identified that the two problems, DOMINATION and $2$-SDM are not equivalent in computational complexity aspects. For example, when the input graph is either doubly chordal or a GS graph then complexities differ. Thus, there is a scope to study each of these problems on its own for particular graph classes.
\section{Conclusion}
In this paper, we have proved the NP-completeness of $ 2 $-SDM for planar graphs, doubly chordal graphs, star convex bipartite and comb convex bipartite graphs. On
the positive side, we have proved that a minimum cardinality $ 2 $-secure dominating
set of a graph with bounded tree-width can be computed in linear time. From approximation point of view, we have proposed an approximation algorithm for obtaining $ 2 $-SDS for general graphs. On the other side, we have also proved some approximation hardness results. It would be interesting to study the complexity of this problem in other graph classes such as interval graphs and block graphs.

\end{document}